\newtheorem{lemma}{Lemma}
\newtheorem{definition}{Definition}
\newtheorem{theorem}{Theorem}
\begin{document}

\preprint{APS/123-QED}

\title{Characterization, synthesis, and optimization of quantum circuits over  multiple-control \texorpdfstring{$\textit{Z}$}{}-rotation gates: A systematic study}

\author{Shihao Zhang}
\email{zhangshh63@mail.sysu.edu.cn}
\author{Junda Wu}
\author{Lvzhou Li}
\email{lilvzh@mail.sysu.edu.cn}

\affiliation{Institute of Quantum Computing and Computer Theory, School of Computer Science and
Engineering, Sun Yat-sen University, Guangzhou 510006, China
}


\date{\today}

\begin{abstract}
We conduct a systematic study of quantum circuits composed of multiple-control $Z$-rotation (MCZR) gates as primitives, since they are widely-used components in quantum algorithms and also have attracted much experimental interest in recent years. Herein, we establish a circuit-polynomial correspondence to characterize the functionality of  quantum circuits over the MCZR gate set with continuous parameters. An analytic method for exactly synthesizing such quantum circuit to implement any given diagonal unitary matrix with an optimal gate count is proposed, which also enables the circuit depth optimal for specific cases with pairs of complementary gates. Furthermore, we 
present a gate-exchange strategy together with a flexible iterative algorithm for effectively optimizing the depth of any MCZR circuit, which can also be applied to quantum circuits over any other commuting gate set.
 Besides the theoretical analysis, the practical performances of our circuit synthesis and optimization techniques are further  evaluated by numerical experiments 
on two typical  examples in quantum computing, including diagonal Hermitian operators and Quantum Approximate Optimization Algorithm (QAOA) circuits with tens of qubits, which  can demonstrate a reduction in circuit depth by 33.40\% and 15.55\% on average over relevant prior works, respectively. Therefore, our methods and results provide a pathway  for implementing quantum circuits and algorithms  on recently developed devices.  
\end{abstract}

\maketitle


\section{Introduction}

With the arrival of the noisy intermediate-scale quantum (NISQ) era \cite{preskill2018quantum}, 
 the synthesis and optimization of quantum gate circuits have become the crucial step towards harnessing the power of
quantum computing on  realistic devices  \cite{leymann2020bitter,bharti2022noisy}. While 
single-qubit rotation and two-qubit controlled-NOT (CNOT) gates have received long-term 
investigations as they constitute an elementary gate set capable of universal quantum computation
\cite{barenco1995elementary,li2022circuit}, the multiple-control rotation (MCR) gates defined to 
act on more qubits also attract a great deal of interest from both fundamental and practical aspects:
\begin{enumerate}
\item[$\bullet$] Theoretically, MCR 
operations often serve as important components in many quantum algorithms or quantum 
computing models, such as preparing quantum hypergraph states \cite{rossi2013quantum,lin2018multiple}, 
building a circuit-based quantum random access memory \cite{park2019circuit,de2020circuit}, 
participating in Shor's factoring algorithm \cite{vandersypen2001experimental} and different 
types of quantum search algorithms 
\cite{figgatt2017complete,yoder2014fixed,roy2022deterministic},  quantum walk 
\cite{qiang2016efficient}, fault-tolerant quantum computation 
\cite{yoder2016universal,chao2018fault}. Therefore, a good understanding of MCR circuits can 
facilitate the design and analysis of new quantum information processing schemes. In fact, MCR gates have been included as basic building blocks in some popular quantum computing software frameworks, such as Qiskit \cite{Qiskit} and PennyLane \cite{PennyLane}. 
\item[$\bullet$] Instead of performing  concatenated single- and two-qubit gates in conventional experiments \cite{martinez2016compiling,figgatt2017complete,mandviwalla2018implementing}, recent experimental
progress has also been made for direct implementations of MCR gates in 
a variety of physical systems, including ion traps \cite{monz2009realization}, neutral atoms \cite{levine2019parallel}, linear and nonlinear quantum optics \cite{mivcuda2013efficient,dong2018polarization,ru2021realization}, and superconducting circuit 
systems \cite{fedorov2012implementation,song2017continuous,Kim2022}. In particular, MCR gates have 
been used as $native$  quantum gates in practical experiments for demonstrating quantum 
algorithms \cite{2020PhysRevApplied.14.014072,hill2021realization} and quantum error correction \cite{reed2012realization}. Therefore, quantum circuits over suitable  MCR gates for benchmarking and exploiting these ongoing quantum hardware need to be  specifically considered. 
\end{enumerate}

To our knowledge, several notable works have investigated quantum circuit models at the level of MCR gates with various techniques and results. 
For example, discussions about the use of multiple-control Toffoli gates as basic building blocks in circuit synthesis were presented in early years, including the use of Reed-Muller Spectra \cite{maslov2007techniques}, Boolean satisfiability (SAT)  techniques \cite{grosse2009exact}, or NCV-$|v_1\rangle$ libraries \cite{sasanian2012realizing}. 
 Typically, in 2014 the  issue of decomposing diagonal Hermitian quantum gates into a set consisting of solely multiple-controlled Pauli $Z$ operators has been studied  \cite{houshmand2014decomposition} by introducing a binary representation of these gates. In 2016,
different circuit identities that can replace certain configurations of the multiple-control Toffoli gates with their simpler multiple-control relative-phase implementations were  reported   \cite{maslov2016advantages}, showing the optimized resource counts. Given these promising results, quantum circuits based on a wider range of multiple-control quantum gates and their applications  are worthy of more in-depth exploration as well.

In this paper, we develop a systematic characterization, synthesis and optimization of quantum circuits over multiple-control $Z$-rotation (MCZR) gates with continuous parameters,
each of which would apply a $Z$-axis rotation gate $R_Z(\theta )=diag\{1,{e}^{i\theta }\}$ with a real-valued $\theta$ to the target qubit only when all its  control qubits are set to 1. 
In fact, such quantum gates play a prominent role in quantum state generation \cite{rossi2013quantum,nakata2014generating,gachechiladze2019changing,banerjee2020quantum}, quantum circuit construction \cite{mottonen2004quantum,bergholm2005quantum,maslov2018use},  
and fault-tolerant quantum computation \cite{yoder2016universal,chao2018fault}. Accordingly, schemes aimed at realizing fast and high-fidelity special or general MCZR gates are constantly being proposed \cite{2018multimode,Barnes2017Fastmicro,2018Su_one_step,2020FastMultiqubit,glaser2022controlled,2022Su_efficient} as well as  experimentally demonstrated \cite{song2017continuous, levine2019parallel,2020PhysRevApplied.14.014072,reed2012realization,hill2021realization} in recent years. In 2017, one-step implementation of the two-qubit $CZ$, three-qubit $CCZ$, and four-qubit $CCCZ$ gates has been realized with an experimental  fidelity of about 0.94, 0.868, and 0.817, respectively, based on the continuous-variable
geometric phase in a
superconducting circuit \cite{song2017continuous}. In 2020,  a multimode superconducting processor circuit with all-to-all connectivity that can implement the near-perfect generalized $CCZ(\theta)$ gates with an arbitrary angle $\theta$ as the native three-qubit controlled operations was  presented~\cite{2020PhysRevApplied.14.014072}, and experimentally demonstrated three-qubit Grover’s search algorithm and the quantum Fourier transform. 
Hence, how to perform quantum computing tasks over such gates with a low gate count and circuit depth is of practical significance, motivating us to conduct a systematic study in this work. For a general consideration, the number of control qubits, the set of acting qubits and the angle parameters $\theta$ of MCZR gates are all  unrestricted. Our main contributions are as follows:
\begin{enumerate}
    \item[$\bullet$] In Section \ref{Characterization}, we put forward a convenient  polynomial representation
to describe the functionality of the MCZR circuits, indicating that any realizable unitary matrix must be a diagonal one (see Eq.~\eqref{MatrixQC}).
\item[$\bullet$]  In Section \ref{synthesis algorithm}, 
we analytically derive a circuit synthesis method that can provide an optimal gate-count for implementing any given diagonal unitary matrix, which also achieves an optimal circuit depth for cases consisting of well-defined pairs of complementary gates (see \textbf{Theorem}~\ref{pair-wise_synthesis} ).
\item[$\bullet$] In Section \ref{depth optimization}, we consider how to reduce the circuit depth of any given MCZR circuit by proposing a gate-exchange strategy (see \textbf{Lemma}~\ref{theorem_complementary}) together with a flexible iterative depth-optimization algorithm (see \textbf{Algorithm}~\ref{greedy_layer}), which can yield better optimization results at the cost of more execution time. 
\item[$\bullet$]  In Section \ref{applications}, we validate the performance of our synthesis and depth-optimization methods for MCZR circuits by experimental evaluations on two typical examples, including the diagonal Hermitian quantum operators and Quantum Approximate Optimization Algorithm (QAOA) circuits, both of which show improvements over previous results. For the former, our constructed circuits on average  can achieve a 33.40\% depth reduction over the prior work~\cite{houshmand2014decomposition} for the circuit size $n\in [2,12]$. For the latter, our optimized circuit depth ranges from 3.00 to 4.05 for $n\in [6,50]$, and on average can reduce the circuit depth up to 58.88\% over randomly selected circuits and 15.55\%
over the results from Ref.~\cite{alam2020efficient}, respectively. Notably, here we achieve a nearly-constant depth for moderate-size QAOA circuits on 3-regular graphs.

\end{enumerate}

We expect the methods and results of this paper would be beneficial to the study of implementing quantum circuits and algorithms on specific quantum systems, and some  further directions are   discussed in Section~\ref{Conclusion}.

\section{\label{give_notations}Notation}
For convenience, here we introduce some notations used throughout this
paper. We denote the set $\{a,a+1,a+2,\ldots ,b\}$ by $\left[ a,b \right ]$ with $a,b$ being integers and $a\le b$. When $a=1$, notation $\left[ a,b \right ]$ is simplified to $\left[ b \right ]$.    For a binary number $x$, we use $q(x)=bin2dec(x)$ to represent its corresponding decimal number. The symbols ${|| v ||}$ and ${|S|}$
indicate the Hamming weight of a binary string $v$ (i.e. the number of 1s in $v$) and the size of the set $S$ (i.e. the number of its elements), respectively.  For an $n$-bit string $v=v_1v_2\ldots v_n$, we denote the  set of positions of all `1' bits as $P_v=\{p_1,p_2,\cdots,p_{|| v ||}\}$ such that $v_{p_1}=v_{p_2}=\cdots =v_{p_{||v||}}=1$. We use $I_{m \times n}$ to denote the size $m \times n$ identity matrix, and the symbol $\circ$ is used to concatenate $m(m\geq 2)$ subcircuits $\{QC_1,QC_2,\ldots,QC_m\}$ to form a circuit $QC$ such that $QC={QC_1}\circ{QC_2}\circ \ldots \circ {QC_m}$.  

\section{\label{Characterization}Characterization of MCZR circuits}

To characterize the functionality of the MCZR circuit,  
 we first establish a useful circuit-polynomial correspondence and then illustrate its  unitary matrix representation.

The MCZR gate family for an  $n$-qubit quantum circuit can be denoted as $\{{C}^{(k)}Z({{\theta }_{c,t}}):c\subset [n], t\in [n],k=\left| c \right|\}$, with $c$ being the control set, $t$ being the target, and ${\theta }_{c,t}$ being a $Z$-rotation  angle parameter.
By definition, the action of a MCZR gate on each  computational basis state is
\begin{align}\label{trans1}
    &{C}^{(k)}Z({{\theta }_{c,t}}):\left| {{x}_{1}},{{x}_{2}},\ldots,{{x}_{n}} \right\rangle \nonumber \\ \mapsto &\exp ({i}{{\theta}_{c,t}}{{x}_{t}}\prod\nolimits_{j\in c}{{{x}_{j}}})\left|{{x}_{1}},{{x}_{2}},\ldots,{{x}_{n}} \right\rangle.
\end{align}
The global phase factor in Eq.~\eqref{trans1} indicates that 
the function of gate ${C}^{(k)}Z({{\theta }_{c,t}})$ remains unchanged 
under any permutation of $k$ control and one target qubits in the set $act=c\bigcup t$.
Therefore, we can simply denote each MCZR gate acting on all  qubits in a set $act\subseteq [n]$ as $G(act,{\theta }_{act})$ such that
\begin{align}\label{trans2}
    &G(act,{\theta }_{act}):\left| {{x}_{1}},{{x}_{2}},\ldots,{{x}_{n}} \right\rangle \nonumber \\ \mapsto &\exp ({i}{{\theta}_{act}}\prod\nolimits_{j\in act}{{{x}_{j}}})\left|{{x}_{1}},{{x}_{2}},\ldots,{{x}_{n}} \right\rangle.
\end{align}
In this way, any quantum circuit $QC$ consisting of $m$ MCZR gates $G(act_1,{\theta }_{act_1})$,
$G(act_2,{\theta }_{act_2})$,\ldots,  $G(act_m,{\theta }_{act_m})$ can transform each basis state as
\begin{align}\label{transQC}
    QC: &\left| {{x}_{1}},{{x}_{2}},\ldots,{{x}_{n}} \right\rangle \nonumber \\ \mapsto &\exp ({i}\cdot {p(x_1,x_2,\ldots,x_n)})\left|{{x}_{1}},{{x}_{2}},\ldots,{{x}_{n}} \right\rangle,
\end{align}
with 
\begin{equation}\label{Phase_Poly}
    p({{x}_{1}},{{x}_{2}},\ldots ,{{x}_{n}})=\sum\limits_{k=1}^{m}{{{\theta }_{ac{{t}_{k}}}}\left( \prod\nolimits_{j\in ac{{t}_{k}}}{{{x}_{j}}} \right)}
\end{equation}
being a $phase$  $polynomial$ associated with the circuit $QC$. That is to say, any given $n$-qubit MCZR circuit $QC$ corresponds to a unique phase polynomial with real coefficients and degree at most $n$.   

Now we turn to the unitary matrix representation of $n$-qubit MCZR circuits. Eq.~\eqref{trans2} reveals that each MCZR gate can be explicitly expressed as a  diagonal unitary matrix of size $2^n\times 2^n$ as
\begin{equation}\label{Diagonal}
    G(act,{\theta }_{act})= \sum\limits_{x\in {{\{0,1\}}^{n}}}
    {\exp ({i}{{\theta}_{act}}\prod\nolimits_{j\in act}{{{x}_{j}}})\left|x \right\rangle \left\langle  x \right|},
\end{equation}
with all its diagonal elements being 1 or $e^{i\theta_{act}}$.
Since all MCZR gates are diagonal and commutative, two or more MCZR gates that act on the same set of  qubits in a circuit can be merged into one by just adding their angle parameters. Without loss of generality
, in this paper we focus on  the non-trivial MCZR circuit $QC$ such that
 all the  constituent $m$  gates have distinct qubit set $act_k(k=1,2,\ldots,m)$, and its unique phase polynomial in Eq.~\eqref{Phase_Poly} exactly has degree $\max \{\left| ac{{t}_{k}} \right|:k=1,2,\ldots,m\}$ and $m$ terms with real coefficients being the angle parameters $\left\{ {{\theta }_{ac{{t}_{k}}}}:k=1,2,\ldots ,m \right\}$. Accordingly, the circuit $QC$ in Eq.~\eqref{transQC} would function as a diagonal unitary matrix as
\begin{equation}\label{MatrixQC}
    D(QC)= \sum\limits_{x\in {{\{0,1\}}^{n}}} {\exp ({i}\cdot {p(x)})\left|x \right\rangle \left\langle  x \right|}, 
\end{equation}
with the polynomial  $p(x=x_1,x_2,\ldots,x_n)$ defined in Eq.~\eqref{Phase_Poly}.
Obviously, two MCZR circuits over different gate sets would implement two distinct diagonal unitary matrices.
For clarity, we display an instance circuit with $n$=3 and its polynomial as well as unitary matrix representation in Fig.~\ref{fig1}.
 
\begin{figure}[htp]
    \centering
    \includegraphics[width=0.5\textwidth]{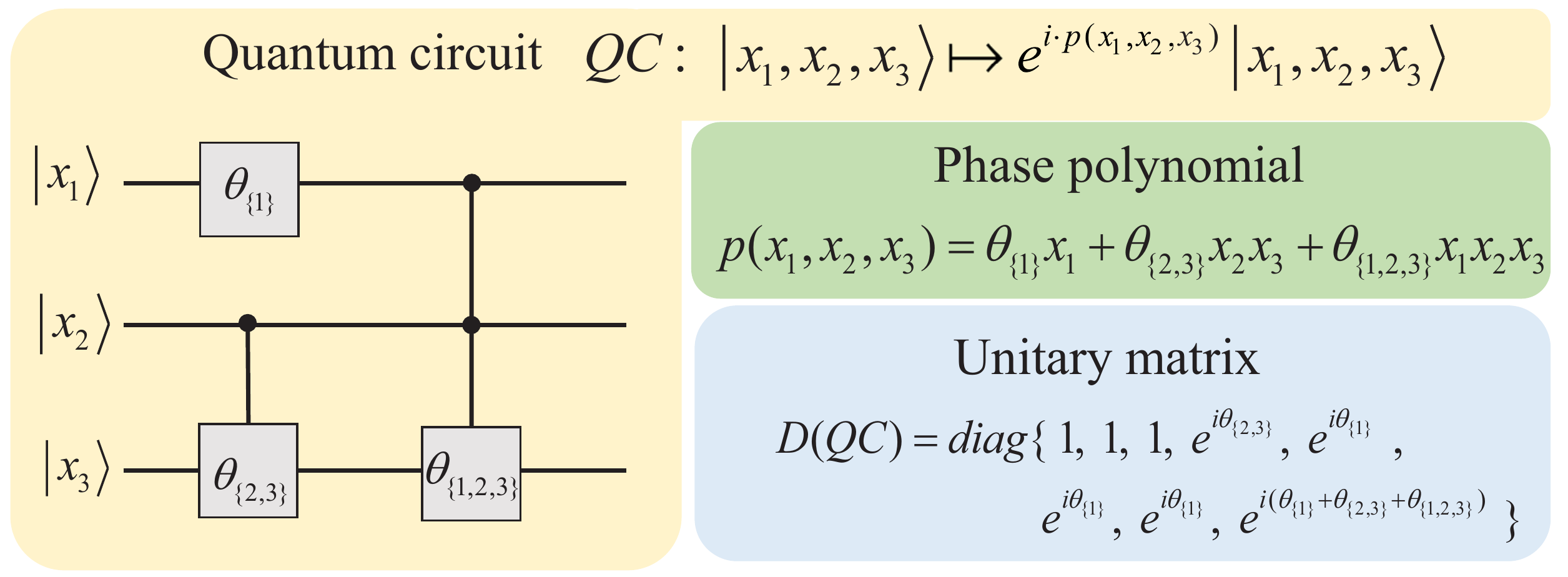}
    \caption{A three-qubit circuit $QC$ consisting of three MCZR gates with angle parameters  $\theta_{\{1\}}$, $\theta_{\{2,3\}}$, and $\theta_{\{1,2,3\}}$, respectively, which can add a phase factor $e^{i\cdot p(x_1,x_2,x_3)}$ to the basis state $\left| {{x}_{1}},{{x}_{2}},{{x}_{3}} \right\rangle$ with a phase polynomial $p(x_1,x_2,x_3)={\theta_{\{1\}}}{x_1}+\theta_{\{2,3\}}{x_2}{x_3}+\theta_{\{1,2,3\}}{x_1}{x_2}{x_3}$. The unitary matrix represented by $QC$ is a diagonal one as $D(QC)=diag\{1,1,1,e^{i\theta_{\{2,3\}}},e^{i\theta_{\{1\}}}, e^{i\theta_{\{1\}}}, e^{i\theta_{\{1\}}},  e^{i(\theta_{\{1\}}+\theta_{\{2,3\}}+\theta_{\{1,2,3\}})}    \}$.}
    \label{fig1}
\end{figure}

\section{Optimal synthesis of MCZR circuits}
\label{synthesis algorithm}

In above section, we have revealed that a MCZR circuit can implement a 
diagonal unitary matrix. 
This in turn raises a natural question: can an arbitrary diagonal operator be implemented by a MCZR gate circuit exactly? This is an attractive subject since diagonal unitary matrices
 have a wide range of applications in quantum computing and quantum information \cite{qiang2016efficient, 2008simulation,  nakata2014diagonal,  alam2020efficient, XU2023quantum}.
 
In this section, 
we address this issue  by proposing a circuit synthesis method to construct an $n$-qubit gate-count optimal MCZR circuit for implementing a size $N\times N\;(N=2^n)$ diagonal unitary matrix
\begin{align}\label{Given Diagonal}
  D(\overrightarrow{\alpha}=\left[\alpha_0,\alpha_1,\ldots,\alpha_{N-1} \right])&=\left[
    \begin{array}{ccccc}
      e^{i\alpha_0}&0&\cdots&0&0\\
      0&e^{i\alpha_1}&\cdots&0&0\\
      \vdots&\vdots&\ddots&\vdots&\vdots\\
      0&0&\cdots&0&e^{i\alpha_{N-1}}
    \end{array}
    \right]\nonumber\\
    &=\sum\limits_{x\in {{\{0,1\}}^{n}}} {\exp ({i} {\alpha_{q(x)}})\left|x \right\rangle \left\langle  x \right|}
\end{align}
with $q(x) =  bin2dec(x)$, which also enables the circuit depth optimal for specific cases. In particular, we emphasize that the $optimality$ mentioned in this paper always indicates an $exact$ optimal value for the gate count and circuit depth rather than  $asymptotically$ optimal results, indicating that our optimal results cannot be improved any more.
For convenience,  here we rewrite each available gate $G(act,{\theta
}_{act})$ in Eq.~\eqref{trans2} as $G(v,{\theta
}_{v})$ by associating $act$ with an $n$-bit string $v={v_1}{v_2}\ldots{v_n} \in \{0,1\}^n$ such that 
\begin{equation}\label{string_v}
    v_j :=
    \begin{cases}
     1, & j\in act;\\
       0, & j\in [n]\backslash act.
    \end{cases}
\end{equation}

Our main results in this section are summarized as \textbf{Theorems}~\ref{theorem_synthesis1}, \ref{theorem_optimal_depth}, and \ref{pair-wise_synthesis}.

\begin{theorem}
    \label{theorem_synthesis1}
    The MCZR gate set $\{G(v,\theta_v)\}$ for implementing a target diagonal unitary matrix $D(\overrightarrow{\alpha})$ in Eq.~\eqref{Given Diagonal} with $2^n$ given parameters $[\alpha_0,\alpha_1,\ldots,\alpha_{N-1} ]$ is unique, 
    and each gate parameter can be computed analytically as 
\begin{equation}
\label{theorem_angle}
\theta_v=(-1)^{|| v ||}\sum\limits_{x:P_x\subseteq P_v}{(-1)^{|| x ||}{\alpha_{q(x)}}},  \quad v\in \{0,1\}^n,
\end{equation}
with $q(x)$, $P_{v}(P_x)$, and $||v||(||x||)$ defined in Section \ref{give_notations}. Since $\theta_v$ indicates a trivial identity gate that can be omitted, the optimal gate-count for implementing $D(\overrightarrow{\alpha})$  is thus $|\{G(v,\theta_v\neq 0) \}|$ with $\theta_v$ from Eq.~\eqref{theorem_angle}.

\end{theorem}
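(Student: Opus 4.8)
The plan is to reduce the claim to a linear-algebra statement about the relation between the phase polynomial and the diagonal entries, and then to invert that relation explicitly. First I would observe that, by Eq.~\eqref{MatrixQC} together with Eq.~\eqref{Phase_Poly}, requiring the circuit to implement $D(\overrightarrow{\alpha})$ forces, for every $x\in\{0,1\}^n$, the identity $\alpha_{q(x)} \equiv p(x) \pmod{2\pi}$, i.e.\ $\alpha_{q(x)} = \sum_{v}\theta_v \prod_{j\in P_v} x_j$. Since $\prod_{j\in P_v}x_j$ equals $1$ exactly when $P_v\subseteq P_x$ and $0$ otherwise, this is the linear system $\alpha_{q(x)} = \sum_{v:\,P_v\subseteq P_x}\theta_v$ over the $2^n$ unknowns $\{\theta_v\}$. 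The matrix of this system is the zeta matrix of the Boolean lattice (subset-sum / M\"obius setup), which is unitriangular once the $v$'s are ordered by Hamming weight, hence invertible; this immediately yields uniqueness of the gate set $\{G(v,\theta_v)\}$.

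Next I would produce the closed form for $\theta_v$ by M\"obius inversion on the subset lattice. The inverse of the zeta transform is the M\"obius transform, whose kernel on the Boolean lattice is $(-1)^{|P_v|-|P_x|}$ for $P_x\subseteq P_v$, giving $\theta_v = \sum_{x:\,P_x\subseteq P_v}(-1)^{||v||-||x||}\alpha_{q(x)}$, which is exactly Eq.~\eqref{theorem_angle} after pulling the sign $(-1)^{||v||}$ out of the sum. I would verify this by direct substitution: plug the claimed $\theta_v$ back into $\sum_{v:\,P_v\subseteq P_x}\theta_v$, swap the order of summation, and use the elementary fact that $\sum_{P_y\subseteq P\subseteq P_x}(-1)^{|P|}=0$ unless $P_y=P_x$ (inclusion--exclusion over the interval $[P_y,P_x]$ of the Boolean lattice), so that only the $\alpha_{q(x)}$ term survives. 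This confirms both existence and the formula, and one should note the induction-on-$||v||$ alternative: for $||v||=0$ the base state $|0\cdots 0\rangle$ fixes the (necessarily zero, up to global phase) constant, and each subsequent $\theta_v$ is pinned down by the equation for $x=v$ once all $\theta_{v'}$ with $P_{v'}\subsetneq P_v$ are known.

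Finally, for the gate-count statement I would argue optimality directly: since the map from the multiset of gate parameters to the diagonal unitary is, by the above, a bijection between $\{\theta_v\}_{v}$ and $\overrightarrow{\alpha}$ (modulo the harmless global phase absorbed in $\theta_{0\cdots0}$), \emph{any} MCZR circuit implementing $D(\overrightarrow{\alpha})$ must, after merging gates on identical qubit sets (Eq.~\eqref{Diagonal} and the commutativity remark), have precisely the parameters $\theta_v$ of Eq.~\eqref{theorem_angle}; a gate with $\theta_v=0$ is the identity and may be dropped, while a gate with $\theta_v\neq 0$ cannot be removed or produced by other gates since each $\prod_{j\in P_v}x_j$ monomial is linearly independent. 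Hence no implementing circuit can use fewer than $|\{v:\theta_v\neq 0\}|$ gates, and our construction attains it. The only mildly delicate point is the $\pmod{2\pi}$ bookkeeping in the first step — one must note that the freedom $\alpha_{q(x)}\mapsto\alpha_{q(x)}+2\pi k_x$ only shifts the parameters of the already-present gates (and can be normalized away), so it does not change which gates are nontrivial; I expect this modular-arithmetic caveat, rather than the M\"obius inversion itself, to be the main thing needing care.
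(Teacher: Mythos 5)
Your proposal is correct and follows essentially the same route as the paper: both reduce the problem to the linear system $\alpha_{q(x)}=\theta_{00\ldots0}+\sum_{v:P_v\subseteq P_x}\theta_v$ and invert it, your M\"obius-inversion/zeta-matrix framing being exactly the named version of the paper's explicit verification that the matrix $K$ with entries $(-1)^{||v||+||x||}$ for $P_x\subseteq P_v$ is the inverse of the coefficient matrix (the paper's parity-counting argument over the interval $P_{v_2}\subseteq P_x\subseteq P_{v_1}$ is your inclusion--exclusion identity). Your added care about the $\bmod\,2\pi$ normalization and the explicit uniqueness-implies-optimality step for the gate count are sound refinements of points the paper treats more tersely.
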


\begin{proof}

According to Eq.~\eqref{string_v}, there are totally $2^n-$1  different types of  gates $\{G(v,{\theta }_v):v\in \{0,1\}^n\backslash 00..0 \}$ available to construct a MCZR circuit $QC$ that functions as  Eq.~\eqref{MatrixQC}, with its phase polynomial $p(x)$ in Eq.~\eqref{Phase_Poly}  rewritten as
\begin{equation}\label{Phase_Poly_2}
    p(x)=\sum\limits_{v\in \{0,1\}^n\backslash 00..0}{{{\theta }_{v}} ({x_1}^{v_1}{x_2}^{v_2}\ldots{x_n}^{v_n}) }.
\end{equation}

Since two quantum circuits which differ only by a global phase factor are equivalent, we suppose that a circuit $QC$ described by Eq.~\eqref{MatrixQC} can perform the target diagonal  matrix $D(\overrightarrow{\alpha})$ in Eq.~\eqref{Given Diagonal} as
\begin{align}
    \label{Realize_D}
    e^{i\theta_{00..0}}\sum\limits_{x\in {{\{0,1\}}^{n}}} &{\exp ({i}\cdot {p(x)})\left|x \right\rangle \left\langle  x \right|} \nonumber \\  &=\sum\limits_{x\in {{\{0,1\}}^{n}}} {\exp ({i} {\alpha_{q(x)}})\left|x \right\rangle \left\langle  x \right|},
\end{align}
leading to
\begin{equation}\label{Realize_equations}
\theta_{00..0}+p(x)=\alpha_{q(x)},\quad x\in \{0,1\}^n
\end{equation}
with $\theta_{00..0}$ being a global phase factor, $p(x)$ in Eq.~\eqref{Phase_Poly_2} and $q(x) =  bin2dec(x)$ defined in Section \ref{give_notations}. In total, Eq.~\eqref{Realize_equations} gives us $2^n$ linear equations  as
\begin{equation}
\label{linear_equa}
\begin{cases}
     & x=00..00: ~   \theta_{00..00}=\alpha_0;  \\  & x=00..01: ~   \theta_{00..00}+\theta_{00..01}=\alpha_1;  \\  & x=00..10: ~   \theta_{00..00}+\theta_{00..10}=\alpha_2;   \\ 
    & x=00..11: ~   \theta_{00..00}+\theta_{00..01}+\theta_{00..10}+\theta_{00..11}=\alpha_3;   \\
    & \vdots   \\  & x=11..11: \quad \sum\limits_{v\in {{\{0,1\}}^{n}}}{{{\theta }_{v}}}={{\alpha }_{N-1}}.
    \end{cases}
\end{equation}
Thus, if we can solve a set of $2^n$ angle parameters  $\{\theta_v:v\in \{0,1\}^n \}$ satisfying Eq.~\eqref{linear_equa} for any given $\overrightarrow{\alpha}=\left[\alpha_0,\alpha_1,\ldots,\alpha_{N-1} \right]$, then we obtain a MCZR circuit over the gate set $\{G(v,{\theta }_v)\}$ for implementing any $D(\overrightarrow{\alpha})$ in Eq.~\eqref{Given Diagonal}. In the following, we give an exact analytical expression of the solution to Eq.~\eqref{linear_equa} and prove its uniqueness. 

The linear equations in Eq.~\eqref{linear_equa}  can be succinctly summarized into
 a standard form as
\begin{equation}
\label{standard_form}    
J \cdot \left( \begin{matrix}
   {{\theta }_{00..00}}  \\
   {{\theta }_{00..01}}  \\{{\theta }_{00..10}}\\{{\theta }_{00..11}}\\
   \vdots   \\
   {{\theta }_{11..11}}  \\
\end{matrix} \right)=\left( \begin{matrix}
   {{\alpha }_{0}}  \\
   {{\alpha }_{1}}  \\   {{\alpha }_{2}}\\   {{\alpha }_{3}}\\  
   \vdots   \\
   {{\alpha }_{N-1}}  \\
\end{matrix} \right)
\end{equation}
such that the size $2^n\times 2^n$ coefficient matrix  $J$ has elements
\begin{equation}
\label{coefficient_matrix}
  J_{\widetilde{q}(x),\widetilde{q}(v)}=\begin{cases}
    1, & P_v\subseteq P_x ;\\
       0, & otherwise ,
    \end{cases} \quad x,v\in \{0,1\}^n,
\end{equation}
where the function $\widetilde{q}(\cdot )=bin2dec(\cdot)+1$ transforms a binary string into a decimal number as the row or  column index of a matrix, and the set $P_{x(v)}$ about a string $x(v)$ is defined in Section \ref{give_notations}. 
Consider another size $2^n\times 2^n$ matrix denoted $K$ with elements
\begin{equation}
\label{inverse_matrix}
  K_{\widetilde{q}(v),\widetilde{q}(x)}=\begin{cases}
    (-1)^{|| v ||+|| x ||}, & P_x\subseteq P_v ;\\
       0, & otherwise ,
    \end{cases} ~ x,v\in \{0,1\}^n,
\end{equation}
here we can prove the product of two matrices in Eqs.~\eqref{inverse_matrix} and \eqref{coefficient_matrix} as $Q=K\cdot J$ is exactly an identity matrix of size $2^n\times 2^n$. By definition, the matrix elements of  $Q$ are
\begin{align}
\label{product_matrix}
  Q_{\widetilde{q}(v_1),\widetilde{q}(v_2)}&=\sum\limits_{x\in {\{0,1\}}^{n}}{{{K}_{\widetilde{q}({{v}_{1}}),\widetilde{q}(x)}}{{J}_{\widetilde{q}(x),\widetilde{q}({{v}_{2}})}}} \nonumber\\
  &=(-1)^{||v_1||}\sum\limits_{x:{{P}_{{{v}_{2}}}}\subseteq {{P}_{x}} \subseteq {{P}_{v_1}}}{{(-1)}^{||x||}}+0, \nonumber \\ & v_1,v_2\in \{0,1\}^n.
\end{align}
 For the diagonal element of $Q$ with $v_1=v_2$ and $P_{v_1}=P_{v_2}$, Eq.~\eqref{product_matrix} turns into
\begin{equation}
\label{diagonal_Q}    Q_{\widetilde{q}(v_1),\widetilde{q}(v_1)}=(-1)^{||v_1||} \cdot (-1)^{||v_1||}=1, \quad v_1 \in \{0,1\}^n
\end{equation}
by taking $x=v_1$. For the off-diagonal elements of $Q$ with $v_1 \ne v_2$ and $P_{v_1} \ne P_{v_2}$, we have two cases: 
\begin{itemize}
\item[(i)] $P_{v_2} \not\subset P_{v_1}$, then no string $x$ can satisfy ${{P}_{v_2}}\subseteq {{P}_{x}} \subseteq {{P}_{v_1}}$, leading Eq.~\eqref{product_matrix} to
$Q_{\widetilde{q}(v_1),\widetilde{q}(v_2)}=0$; 
\item[(ii)] $P_{v_2} \subset P_{v_1}$, then there are totally $2^{||v_1||-||v_2||}$ strings $x$ that can satisfy ${{P}_{v_2}}\subseteq {{P}_{x}} \subseteq {{P}_{v_1}}$, wherein $||x||$ is even for exactly half of these $x$ and odd for the other half, leading Eq.~\eqref{product_matrix} to
$Q_{\widetilde{q}(v_1),\widetilde{q}(v_2)}=0$. 
\end{itemize}

At this point, we prove that $K \cdot J=I_{2^n \times 2^n}$ and thus the square matrix $K$ defined in Eq.~\eqref{inverse_matrix} is the unique inverse matrix of the coefficient matrix $J$ in  Eq.~\eqref{standard_form} by the common knowledge of linear algebra. By multiplying both sides of Eq.~\eqref{standard_form} with $K$ and using Eq.~\eqref{inverse_matrix}, we obtain an analytic form of the solutions $\{\theta_v\}$ to Eq.~\eqref{standard_form} as
\begin{align}
\label{Analytic_angle}
\theta_v &=\sum\limits_{x\in \{0,1\}^n}{{K_{\widetilde{q}(v),\widetilde{q}(x)}}\alpha_{q(x)}} \nonumber \\
&=(-1)^{|| v ||}\sum\limits_{x:P_x\subseteq P_v}{(-1)^{|| x ||}{\alpha_{q(x)}}},  \quad v\in \{0,1\}^n,
\end{align}
with $q(x)$, $P_{v}(P_x)$, and $||v||(||x||)$ defined in Section \ref{give_notations}.

\begin{figure*}[t]
    \centering
   \includegraphics[width=0.9\linewidth]{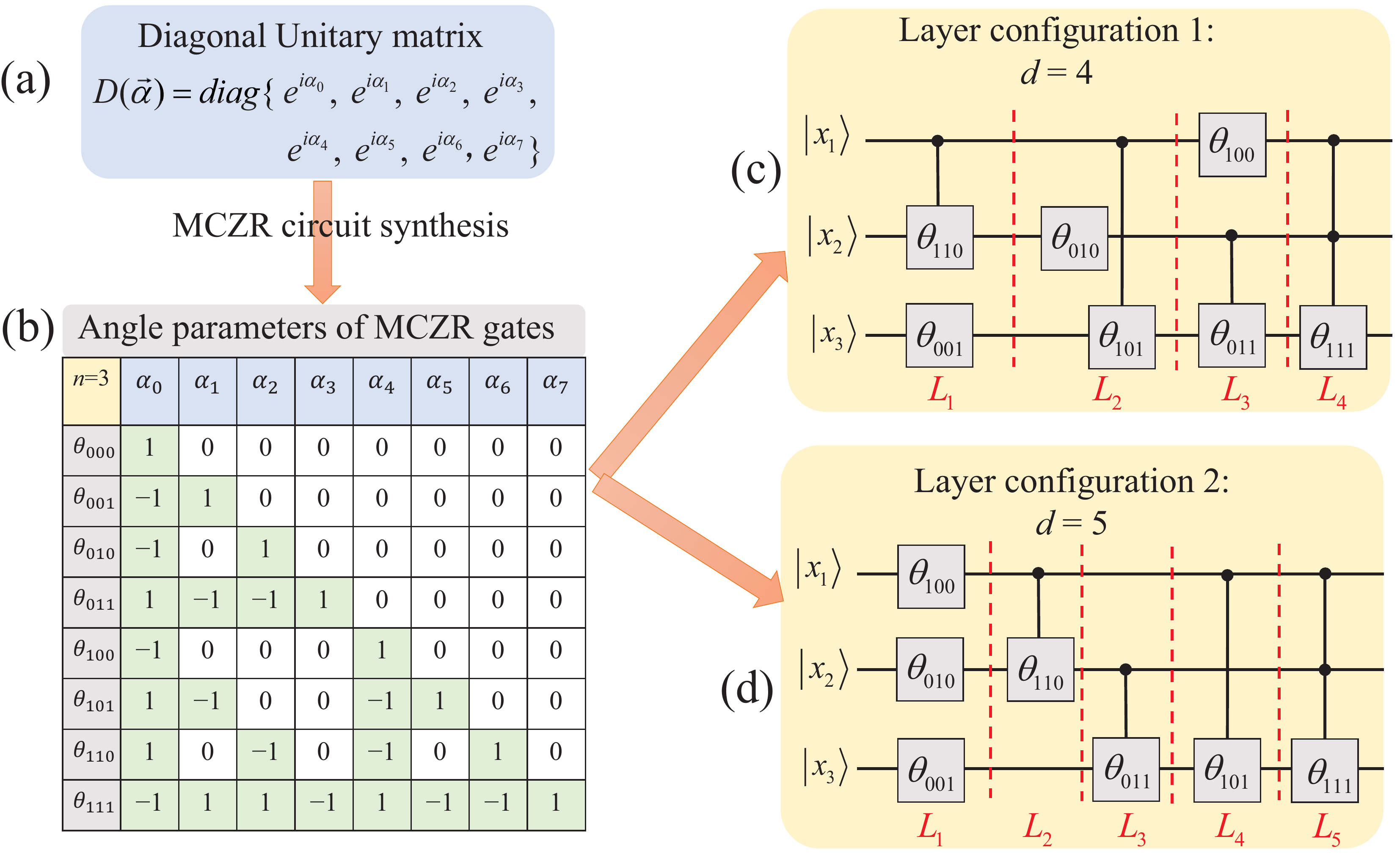}
    \caption{Example with $n=3$ to show the gate-count optimal synthesis of quantum MCZR circuits. To construct a circuit for realizing a given diagonal unitary matrix  $D(\protect \overrightarrow{\alpha})$   of size $8\times8$ in (a), we can first use Eq.~\eqref{theorem_angle} to  solve the angle parameters $\{\theta_v: v\in \{0,1\}^3\}$ of all employed  MCZR gates as linear combinations of given $\{\alpha_0, \alpha_1,\ldots,\alpha_7\}$ with non-zero coefficients marked green shown in (b).  Note the angle  parameter  $\theta_v=0$ indicates a trivial identity gate that can be  removed in the circuit. Then, these gates are arranged in different layers to give a circuit layer configuration.  For a general case, we present a circuit consisting of all gates in complementary pairs with a  depth $d=4$ in (c), while another circuit with  a depth $d=5$ is depicted in (d)  for comparison. As a summary, the circuit in (c) to implement (a) 
    can be directly obtained by \textbf{Theorem}~\ref{pair-wise_synthesis}.}
    \label{fig2}
\end{figure*}
 
  In summary,  Eq.~\eqref{Analytic_angle} represents a unique set of solutions so that the resultant MCZR circuit for implementing $D(\overrightarrow{\alpha})$ in Eq.~\eqref{Given Diagonal} naturally 
  achieves an optimal gate count. The angle parameter $\theta_v=0$ indicates its associated MCZR gate $G(v,\theta_v)$ is a trivial identity gate  that can be omitted. Therefore, the optimal gate count for realizing any diagonal unitary operator in Eq.~\eqref{Given Diagonal} is $\left| \{G(v,{{\theta }_{v}}\ne 0)\} \right|$ with the  gate parameters  obtained from Eq.~\eqref{Analytic_angle}, and in the worst case is $2^n-1$ when all 
  angle parameters are solved to be non-zero.    For clarity, an example with $n=3$ is shown in Figs.~\ref{fig2} (a) and (b).
\end{proof}

As a by-product, the uniqueness of the gate set $\{G(v,\theta_v)\}$ for implementing a diagonal unitary matrix as declared in  \textbf{Theorem}~\ref{theorem_synthesis1} gives us \textbf{Lemma}~\ref{indep_gate}.

\begin{lemma}
    \label{indep_gate}
    All MCZR gates in $\{G(v,\theta_v):v\in \{0,1\}^n,{\theta_v}\in [0,2\pi)  \} $ are independent, that is, none of them can be decomposed into a combination of the others. 
\end{lemma}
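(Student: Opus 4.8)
The plan is to obtain \textbf{Lemma}~\ref{indep_gate} as an immediate corollary of the uniqueness part of \textbf{Theorem}~\ref{theorem_synthesis1}, arguing by contradiction. Fix a non-trivial gate $G(w,\theta_w)$ (so $w\neq 00\ldots0$ and $\theta_w\neq 0$) and suppose it could be decomposed into a combination of the other MCZR gates, i.e. $G(w,\theta_w)=G(v_1,\phi_1)\circ\cdots\circ G(v_r,\phi_r)$ with every $v_i\neq w$, possibly up to a global phase. Since all MCZR gates are diagonal and commute, composition merely adds their phase polynomials, so the right-hand side is a genuine MCZR circuit in the sense of Section~\ref{Characterization}, and it implements the same diagonal unitary as the single gate $G(w,\theta_w)$.

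First I would note that a lone gate $G(w,\theta_w)$ is itself a diagonal unitary of the form $D(\overrightarrow{\alpha})$ in Eq.~\eqref{Given Diagonal}, namely the one with $\alpha_{q(x)}=\theta_w$ when $P_w\subseteq P_x$ and $\alpha_{q(x)}=0$ otherwise. Substituting this $\overrightarrow{\alpha}$ into Eq.~\eqref{theorem_angle} and using the elementary cancellation $\sum_{T\subseteq P_v\setminus P_w}(-1)^{|T|}=0$ whenever $P_v\setminus P_w\neq\emptyset$ (and an empty sum whenever $P_w\not\subseteq P_v$), the unique solution is $\theta_v=\theta_w$ for $v=w$ and $\theta_v=0$ for every $v\neq w$ (including the global-phase coordinate $\theta_{00\ldots0}=0$). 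Hence, by \textbf{Theorem}~\ref{theorem_synthesis1}, the unique MCZR gate set implementing this $D(\overrightarrow{\alpha})$ is the singleton $\{G(w,\theta_w)\}$.

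Next I would put the decomposition into the canonical form to which \textbf{Theorem}~\ref{theorem_synthesis1} applies: merge any gates sharing the same qubit set by adding their angles, and delete trivial gates, leaving a MCZR circuit over a gate set $\{G(v,\psi_v)\}$ whose labels all lie in $\{v_1,\ldots,v_r\}\subseteq\{0,1\}^n\setminus\{w\}$ — merging gates labelled $v_i\neq w$ can never create the label $w$. This circuit implements $D(\overrightarrow{\alpha})$, so by the uniqueness just established its gate set must be the singleton $\{G(w,\theta_w)\}$, which is impossible because $w$ is not among its labels. (If the merging annihilates all gates, then $D(\overrightarrow{\alpha})$ is a pure global phase, forcing $\theta_w\equiv 0\pmod{2\pi}$, again contradicting $\theta_w\neq 0$.) This contradiction proves the Lemma.

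The computations involved are routine; the only points needing care are bookkeeping around the word ``combination'': one must observe that (i) composing commuting diagonal gates corresponds to summing phase polynomials, so the decomposition really is a MCZR circuit; (ii) an allowed global-phase ambiguity is absorbed by the $\theta_{00\ldots0}$ term exactly as in the proof of \textbf{Theorem}~\ref{theorem_synthesis1}; and (iii) the normalization step (merging equal-support gates, removing $\psi=0$ gates) never introduces the forbidden label $w$. Once these are settled, the uniqueness of the gate set does all the work, and the restriction to non-trivial gates is harmless since a trivial identity gate carries no functionality and is excluded from the intended statement.
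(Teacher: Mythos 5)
Your proposal is correct and follows exactly the route the paper intends: the paper derives this lemma in one line as a by-product of the uniqueness of the gate set established in \textbf{Theorem}~\ref{theorem_synthesis1}, and your argument simply fills in the details (computing the unique solution of Eq.~\eqref{theorem_angle} for the diagonal unitary of a single gate and invoking uniqueness to rule out any decomposition over other labels). The inclusion--exclusion cancellation and the handling of the global phase and of merging equal-support gates are all correct.
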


Besides the gate count, the circuit depth is 
    another important circuit cost metric that needs attention, since a reduced circuit-depth means less circuit execution time. A quantum circuit can be represented as a directed acyclic
graph (DAG) in which each node corresponds to a circuit's
gate and each edge corresponds to the input/output of a gate.
Then the circuit depth $d$  is defined as the maximum length
of a path flowing from an input of the circuit to an output \cite{amy2013meet}. Equivalently speaking, $d$ is the number of layers of quantum gates that compactly act on  disjoint sets of qubits   \cite{bravyi2018quantum, abdessaied2016reversible}. For example, the depth of the circuit in Fig.~\ref{fig1} with three non-zero angle parameters is $d=2$. 
Notice that a set of MCZR gates may form distinct layer configurations with respective circuit depths, as exemplified by the comparison between the depth-4 circuit in Fig.~\ref{fig2}(c) and depth-5 circuit in Fig.~\ref{fig2}(d). More generally, in \textbf{Theorem}~\ref{theorem_optimal_depth} we reveal the optimal circuit depth of any MCZR circuit constructed from   pairs of complementary gates as defined in \textbf{Definition}~\ref{gate-pair}.

\begin{definition}
    \label{gate-pair}
    We call a pair of MCZR gates $G(v_1,\theta_{v_1})$ and $G(v_2,\theta_{v_2})$ are complementary if and only if they  satisfy $v_1 \oplus v_2=11..11$.
\end{definition}

\begin{theorem}
    \label{theorem_optimal_depth}
    The optimal circuit depth of any MCZR circuit constructed from $d_1$ pairs of complementary gates is exactly $d_1$.
\end{theorem}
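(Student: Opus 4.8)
The plan is to prove the two inequalities $d\le d_1$ and $d\ge d_1$ for the optimal depth $d$ of any MCZR circuit whose gate set is the disjoint union of $d_1$ complementary pairs $\{G(a_k,\theta_{a_k}),G(b_k,\theta_{b_k})\}$ with $a_k\oplus b_k=11..11$, $k\in[d_1]$.

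For the upper bound $d\le d_1$ I would simply exhibit a depth-$d_1$ layer configuration. Since $a_k$ and $b_k$ are bitwise complements, their position sets satisfy $P_{a_k}\cap P_{b_k}=\emptyset$ and $P_{a_k}\cup P_{b_k}=[n]$, so the two gates of pair $k$ act on disjoint qubit sets and fit into one layer; assigning pair $k$ to layer $k$ gives a valid $d_1$-layer circuit. Because all MCZR gates are diagonal and commute (Eq.~\eqref{Diagonal}), this rearrangement implements exactly the same diagonal unitary, hence the optimal depth is at most $d_1$.

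For the lower bound $d\ge d_1$ the key step is a qubit-occupancy invariant. Fix an arbitrary qubit index $q\in[n]$ and count the gates of the circuit that act nontrivially on $q$, i.e.\ those $G(v,\theta_v)$ with $q\in P_v$. For each complementary pair $k$, exactly one of $q\in P_{a_k}$ or $q\in P_{b_k}$ holds, because $\{P_{a_k},P_{b_k}\}$ partitions $[n]$; thus precisely one gate from every pair touches $q$, and the total number of gates touching $q$ equals $d_1$, independently of the choice of $q$. On the other hand, in any layer configuration of depth $d$ each layer consists of gates with pairwise disjoint acting sets, so each layer contains at most one gate touching $q$. Distributing the $d_1$ gates that touch $q$ over the $d$ layers forces $d_1\le d$. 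Combining the two bounds yields that the optimal circuit depth is exactly $d_1$.

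I do not expect a serious technical obstacle here; the only point that needs care is identifying the correct monotone quantity, namely that the number of gates acting on a fixed qubit is layering-independent and is pinned to $d_1$ by the complementarity condition of \textbf{Definition}~\ref{gate-pair}. Everything else—disjointness of supports inside a pair, commutativity of MCZR gates, and the definition of depth as the number of layers acting on disjoint qubit sets—is already available from the setup. As a small remark I would note that the degenerate pair $\{00..0,11..11\}$ is covered by the same argument, since the empty-support gate is merely a global phase occupying no qubit, so it never competes with anything in the counting above.
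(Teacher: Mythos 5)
Your proposal is correct and follows essentially the same route as the paper: the lower bound is the same counting argument (the paper sums the number of `1' bits over all qubits to get $nd\ge nd_1$, whereas you fix a single qubit and observe that exactly one gate per complementary pair touches it, which is a per-qubit restatement of the same count), and the upper bound is the same explicit pairing of each complementary pair into one layer. No gaps.
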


\begin{proof}
Suppose we construct an $n$-qubit MCZR circuit over $d_1$ pairs of complementary gates $\{G(v,\theta_v)\}$ by arranging them into $d$ layers denoted $\{L_1, L_2,\ldots,L_d\}$ such that all gates in each layer $L_i$ $(i=1,2,\ldots,d)$ are disjoint. Here we prove the minimum value of $d$ is $d_1$.

For brevity, we denote each gate layer $L_i$ by an  $n$-bit string as
\begin{equation}
\label{each_layer}
s(L_i)=\sum\limits_{v:G(v,{{\theta }_{v}})\in {{L}_{i}}}{v}, \quad i=1,2,\ldots,d,
\end{equation}
and all $d$ such strings totally own $nd$ bits of 0 and 1. On the other hand, the total number of `1' bits in 
$2{d_1}$ strings $v$  representing these gates is $n{d_1}$. Therefore, we have

\begin{equation}
nd\ge n{d_1}
\end{equation}
and the lower bound of circuit depth as
\begin{equation}
\label{optimal_depth}
d\ge {d_1}.    
\end{equation}

 Obviously, the equality in Eq.~\eqref{optimal_depth} can be achieved when every gate layer $L_i$ $(i=1,2,\ldots,d)$ has a pair of complementary gates, thus forming a circuit with an optimal depth $d_1$.
\end{proof}

A typical application of \textbf{Theorem}~\ref{theorem_optimal_depth} is to construct a depth-optimal MCZR circuit over  all $2^n-1$ non-zero gate parameters solved from \textbf{Theorem}~\ref{theorem_synthesis1} for implementing a given diagonal operator. That is, when all these gates  
are arranged into $(2^{n}-2)/2=2^{n-1}-1$ layers of complementary gates as $L_1=[v=00..01,v=11..10]$, $L_2=[v=00..10,v=11..01]$,\ldots, $L_{2^{n-1}-1}=[v=01..11,v=10..00]$  plus a sole gate in $L_{2^{n-1}}=[v=11..11]$, a circuit with an optimal depth $2^{n-1}$ is obtained. For clarity, a  circuit example with $n=3$ and the optimal depth $d=4$ is shown in Fig.~\ref{fig2}(c), while another circuit with a larger depth $d=5$ is shown in Fig.~\ref{fig2}(d) for comparison. 

Finally, the combination of \textbf{Theorem}~\ref{theorem_synthesis1} and \textbf{Theorem}~\ref{theorem_optimal_depth} leads to a pair-wise circuit synthesis method described as  \textbf{Theorem}~\ref{pair-wise_synthesis}.

\begin{theorem}[Pair-wise MCZR circuit synthesis]
    \label{pair-wise_synthesis}
    A MCZR circuit $QC$ over the gate set $\{G(v,\theta_v)\}$ for implementing an arbitrary diagonal unitary matrix $D(\overrightarrow{\alpha})$ in Eq.~\eqref{Given Diagonal} can be synthesized by computing each gate parameter $\theta_v$ according to Eq.~\eqref{theorem_angle} in a pair-wise way as $L_1=[v=00..01,v=11..10]$, $L_2=[v=00..10,v=11..01]$,\ldots, $L_{2^{n-1}-1}=[v=01..11,v=10..00]$, $L_{2^{n-1}}=[v=11..11]$ such that $QC={L_1}\circ{L_2}\circ \ldots \circ{L_{2^{n-1}}}$. Note that $G(v,\theta_v=0)$ is an identity gate that will not appear in $QC$, and thus $QC$ has an optimal gate count ${m_D}=|\{G(v,\theta_v \neq 0)\}|$ for any $D(\overrightarrow{\alpha})$. Specifically, $QC$ has an optimal circuit depth when the implementation of $D(\overrightarrow{\alpha})$ only employs pairs of complementary gates. For example, this theorem gives us the circuit 
    in Fig.~\ref{fig2}(c) 
     to implement Fig.~\ref{fig2}(a).

\end{theorem}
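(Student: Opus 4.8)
The plan is to obtain Theorem~\ref{pair-wise_synthesis} as an immediate corollary of Theorems~\ref{theorem_synthesis1} and~\ref{theorem_optimal_depth}, so the proof is mostly a matter of assembling the two ingredients and verifying that the prescribed layer configuration is a legal circuit decomposition. First I would invoke Theorem~\ref{theorem_synthesis1}: for the target $D(\overrightarrow{\alpha})$ compute $\theta_v$ for every $v\in\{0,1\}^n\backslash 00..0$ via Eq.~\eqref{theorem_angle}. Theorem~\ref{theorem_synthesis1} guarantees this is the \emph{unique} solution of the defining linear system~\eqref{linear_equa}, so the gate set $\{G(v,\theta_v):\theta_v\neq 0\}$ is forced, and the corresponding circuit realizes $D(\overrightarrow{\alpha})$ up to the global phase $e^{i\theta_{00..0}}=e^{i\alpha_0}$ (which is irrelevant for circuit equivalence). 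Hence $m_D=|\{G(v,\theta_v\neq 0)\}|$ is the minimum possible gate count, attaining $2^n-1$ in the worst case.

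Next I would check that the proposed layering $L_1=[v=00..01,\,v=11..10]$, \ldots, $L_{2^{n-1}-1}=[v=01..11,\,v=10..00]$, $L_{2^{n-1}}=[v=11..11]$ is a valid circuit decomposition. For $i<2^{n-1}$ the two strings $v$ and $\bar v:=v\oplus 11..11$ satisfy $P_v\cap P_{\bar v}=\emptyset$, so the two MCZR gates $G(v,\theta_v)$ and $G(\bar v,\theta_{\bar v})$ act on disjoint qubit sets and may legitimately share a layer; by Definition~\ref{gate-pair} they are complementary. Since all MCZR gates commute, deleting those $G(v,\theta_v)$ with $\theta_v=0$ (trivial identities) from the layers neither changes the implemented unitary nor increases the depth. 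Therefore $QC={L_1}\circ{L_2}\circ\cdots\circ{L_{2^{n-1}}}$ implements $D(\overrightarrow{\alpha})$ with the optimal gate count $m_D$.

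For the depth-optimality claim I would then specialize to the stated case where the nonzero gates occur only in complementary pairs, i.e.\ $\theta_v\neq 0\iff\theta_{\bar v}\neq 0$. In that case every nonempty layer $L_i$ contains exactly one complementary pair and no layer is "half-filled", so $QC$ is precisely a circuit built from $d_1=m_D/2$ pairs of complementary gates laid out in $d_1$ layers. Theorem~\ref{theorem_optimal_depth} then gives that the optimal depth of any such circuit is exactly $d_1$, which our layout attains; hence $QC$ is depth-optimal. The example $QC$ of Fig.~\ref{fig2}(c) implementing Fig.~\ref{fig2}(a) follows by direct inspection.

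The step I expect to require the most care is the bookkeeping when identity gates are removed. I must argue that under the complementary-pair hypothesis the surviving layers still exactly realize the Theorem~\ref{theorem_optimal_depth} optimum—each surviving layer being a \emph{full} pair—rather than merely respecting the inequality $d\ge d_1$; and, conversely, that in the general (non-paired) case removing identities can only help, so the construction never does worse than the stated gate count. Everything else is a direct citation of the two preceding theorems.
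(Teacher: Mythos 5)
Your proposal is correct and follows essentially the same route as the paper, which presents this theorem as a direct combination of Theorem~\ref{theorem_synthesis1} (unique gate set and optimal gate count via Eq.~\eqref{theorem_angle}) and Theorem~\ref{theorem_optimal_depth} (depth optimality for circuits made of complementary pairs). Your added care about the disjointness of $P_v$ and $P_{\bar v}$ and about deleting identity gates is a sound, if routine, verification that the paper leaves implicit.
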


    In summary, we provide a gate-count optimal circuit synthesis (that is, \textbf{Theorem}~\ref{pair-wise_synthesis}) for realizing a given diagonal unitary matrix in Eq.~\eqref{Given Diagonal}, which also enables the circuit depth optimal when all obtained  non-zero angle parameters correspond to pairs of complementary gates. Furthermore, in the following we  consider how to optimize the depth of any other types of MCZR circuits . 

\section{Depth optimization of MCZR circuits}\label{depth optimization}
 Since all MCZR gates are diagonal and commutative, the task of optimizing the depth of any given MCZR circuit is equivalent to rearranging all its gates into as few disjoint layers as possible. In this section, we propose a gate-exchange strategy together with a flexible algorithm for effectively reducing the circuit depth.

 \subsection{A gate-exchange strategy for optimizing the circuit depth}\label{pre-processing strategy} 

First of all, we present a simple but useful strategy in \textbf{Lemma}~\ref{theorem_complementary} that can reduce (or retain) the depth of any MCZR circuit.

\begin{lemma}  \label{theorem_complementary}
For a depth-$d_1$ MCZR circuit $QC_1$ over the gate set $S=\{G(v,\theta_v)\}$, suppose that (1) a pair of complementary gates $G(v_1,\theta_{v_1})$ and $G(v_2,\theta_{v_2})$ are located in two different layers of $QC_1$, and (2) the gate $G(v_1,\theta_{v_1})$ and a subset of gates $\{G(v',\theta_{v'})\} \subset S$ are located in the same layer of $QC_1$. Then, the exchange of $\{G(v',\theta_{v'})\}$ and $G(v_2,\theta_{v_2})$ in $QC_1$ would arrange $G(v_1,\theta_{v_1})$ and $G(v_2,\theta_{v_2})$ into one layer, leading to a new depth-$d_2$ circuit $QC_2$  with $d_2\le d_1$.
\end{lemma}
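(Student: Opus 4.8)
The plan is to argue entirely at the level of the layer partition, exploiting the two facts already available: all MCZR gates commute, so any relayering of the same multiset of gates implements the same diagonal unitary and the depth is just the minimal number of layers of mutually qubit‑disjoint gates; and, by \textbf{Definition}~\ref{gate-pair}, complementarity $v_1\oplus v_2=11\ldots11$ is equivalent to $P_{v_1}\cap P_{v_2}=\emptyset$ together with $P_{v_1}\cup P_{v_2}=[n]$, i.e. the acting sets of $G(v_1,\theta_{v_1})$ and $G(v_2,\theta_{v_2})$ partition all $n$ qubits. I read hypothesis (2) as saying that the layer of $QC_1$ containing $G(v_1,\theta_{v_1})$ consists precisely of $G(v_1,\theta_{v_1})$ together with the gates $\{G(v',\theta_{v'})\}$, since that is what the prescribed exchange requires in order to isolate $G(v_1,\theta_{v_1})$.

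First I would fix notation: let $L_a$ be the layer of $QC_1$ containing $G(v_1,\theta_{v_1})$, so $L_a=\{G(v_1,\theta_{v_1})\}\cup\{G(v',\theta_{v'})\}$, and let $L_b$ be the layer containing $G(v_2,\theta_{v_2})$, which by hypothesis (1) is distinct from $L_a$. Since every gate sharing a layer with $G(v_1,\theta_{v_1})$ acts on a set disjoint from $P_{v_1}$, each $v'$ satisfies $P_{v'}\subseteq[n]\setminus P_{v_1}=P_{v_2}$; symmetrically, every gate $G(w,\theta_w)$ in $L_b$ other than $G(v_2,\theta_{v_2})$ satisfies $P_w\subseteq[n]\setminus P_{v_2}=P_{v_1}$. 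These two containments are the whole engine of the proof.

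Next I would define $QC_2$ explicitly by the exchange: replace $L_a$ with $L_a'=\bigl(L_a\setminus\{G(v',\theta_{v'})\}\bigr)\cup\{G(v_2,\theta_{v_2})\}=\{G(v_1,\theta_{v_1}),G(v_2,\theta_{v_2})\}$, replace $L_b$ with $L_b'=\bigl(L_b\setminus\{G(v_2,\theta_{v_2})\}\bigr)\cup\{G(v',\theta_{v'})\}$, and leave all other layers untouched. It then remains to check that $L_a'$ and $L_b'$ are legitimate layers, i.e. are made of pairwise qubit‑disjoint gates. For $L_a'$ this is immediate from $P_{v_1}\cap P_{v_2}=\emptyset$. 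For $L_b'$, the gates inherited from $L_b\setminus\{G(v_2,\theta_{v_2})\}$ are pairwise disjoint (they were a sub‑collection of a layer), the gates $\{G(v',\theta_{v'})\}$ are pairwise disjoint (same reason), and any cross pair is disjoint because one support lies in $P_{v_1}$ and the other in $P_{v_2}$. Hence $QC_2$ is a valid layering of the same multiset of gates as $QC_1$, so $D(QC_2)=D(QC_1)$ by commutativity, and $QC_2$ uses no more nonempty layers than $QC_1$, giving $d_2\le d_1$; a strict decrease occurs exactly when $\{G(v',\theta_{v'})\}$ is empty and $G(v_2,\theta_{v_2})$ was the sole gate of $L_b$.

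I do not expect a real obstacle: the content of the lemma is entirely the observation that complementary gates have complementary acting sets, so the two layers disturbed by the swap remain internally disjoint. The only points needing a little care are the degenerate configurations ($\{G(v',\theta_{v'})\}$ empty, or $L_b=\{G(v_2,\theta_{v_2})\}$), which are precisely what makes the conclusion $d_2\le d_1$ rather than $d_2=d_1$, and the (trivial, from commutativity) remark that a pure relayering preserves the implemented matrix.
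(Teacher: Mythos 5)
Your proof is correct and follows essentially the same route as the paper, which only offers an informal ``intuitive explanation'' of this lemma via the same gate-exchange argument. Your version is in fact more careful than the paper's: the containments $P_{v'}\subseteq P_{v_2}$ and $P_w\subseteq P_{v_1}$ and the explicit check that both modified layers remain pairwise qubit-disjoint are exactly what the paper's remark that ``$G(v_2,\theta_{v_2})$ alone acts on more qubits than any gate in $\{G(v',\theta_{v'})\}$'' is silently relying on.
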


We give an intuitive explanation of \textbf{Lemma}~\ref{theorem_complementary}. In the original depth-$d_1$ circuit $QC_1$, suppose the gate $G(v_1,\theta_{v_1})$ and gates in $\{G(v',\theta_{v'})\}$ are located in a layer indexed by $L_1$, while the gate $G(v_2,\theta_{v_2})$ is located in another layer indexed $L_2$. 
Then the exchange of $G(v_2,\theta_{v_2})$ and $\{G(v',\theta_{v'})\}$ arranges the former and the latter into the layer $L_1$ and $L_2$, respectively. Since the gate $G(v_2,\theta_{v_2})$ alone acts on more qubits than any gate in $\{G(v',\theta_{v'})\}$ does, such a gate-exchange operation  would lead to two possible situations about the resultant circuit $QC_2$: (1) $QC_2$ has the same depth $d_1$ as $QC_1$, or (2) some (or all) of the gates in $\{G(v',\theta_{v'})\}$ and the gates adjacent to layer $L_2$ can be merged into the same layer, thus causing a depth reduction  over $QC_1$.

Based on \textbf{Lemma}~\ref{theorem_complementary}, we can derive a two-step framework for achieving a depth-optimal MCZR circuit as described in \textbf{Lemma}~\ref{optimal_any_depth}.

\begin{lemma}\label{optimal_any_depth}
In principle, the optimal circuit depth $d_{opt}$ of the MCZR circuits constructed from a given gate set  $S=\{G(v,\theta_v)\}$ with $|S|=m$ can be achieved by two steps: (1) arrange all $d_1$  pairs of complementary gates in $S$ into a depth-$d_1$ configuration, and (2) find a depth-optimal circuit over the other $r=(m-{2d_1})$ gates. Then $d_{opt}$ is equal to the total depth of these two parts.
\end{lemma}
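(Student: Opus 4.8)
The plan is to prove the stated identity $d_{opt}=d_1+d_r$, where $d_r$ denotes the optimal depth of a circuit built from the $r=m-2d_1$ residual gates of $S$ (the gates with no complementary partner inside $S$), by establishing the two inequalities separately; the $\ge$ direction carries essentially all the work.

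For $d_{opt}\le d_1+d_r$ I would just exhibit a witness. Put each of the $d_1$ complementary pairs $\{G(v),G(v')\}$, $v\oplus v'=11\ldots1$, into its own layer; this is legal because such a pair acts on two disjoint qubit sets whose union is $[n]$, and it uses exactly the $2d_1$ paired gates in $d_1$ layers. Concatenating this block with a depth-optimal (depth $d_r$) circuit over the remaining $r$ gates gives depth $d_1+d_r$, and since all MCZR gates are diagonal and commute (Section~\ref{Characterization}) the concatenation implements the same unitary as $S$.

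For the reverse inequality I would start from any layering of $S$ of minimal depth $d_{opt}$ and drive it into the two-step form by applying the gate-exchange strategy of Lemma~\ref{theorem_complementary} to the $d_1$ pairs, one pair at a time, keeping the invariant that before pair $j$ is treated, pairs $1,\dots,j-1$ each occupy their own layer. If pair $j$ already lies in one layer, there is nothing to do; otherwise its two members sit in distinct layers, and --- taking as the anchor whichever member does not sit alone, or merging the two singleton layers in the degenerate case that both do --- Lemma~\ref{theorem_complementary} exchanges the anchor's co-layer gates with the other member so that the pair ends up in a single layer while the depth does not increase. Three bookkeeping facts close the argument: (i) once $G(v)$ and $G(v')$ share a layer, that layer is automatically full and holds no other gate, because the pair covers all $n$ qubits; (ii) the layers touched by the exchange for pair $j$ are the current layers of $G(v_j)$ and $G(v'_j)$ together with the anchor's co-layer gates, and none of these can be an already-collected pair-layer (each gate belongs to a unique pair, and a collected pair-layer is full with its own pair), so the invariant is preserved; and (iii) after all $d_1$ pairs have been collected, every paired gate lies in its pair-layer, so the remaining layers contain only the $r$ residual gates and form a valid layering of them, hence number at least $d_r$. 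The transformed circuit therefore has depth exactly $d_1$ plus the number of residual layers, i.e.\ at least $d_1+d_r$, and since the exchanges never raised the depth, $d_{opt}\ge d_1+d_r$; together with the first bound this gives $d_{opt}=d_1+d_r$.

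I expect the main obstacle to be item (ii): making rigorous that iterating Lemma~\ref{theorem_complementary} terminates with all $d_1$ pairs collected simultaneously and that a pair already placed in one layer is never split again by a later exchange or merge. The invariant ``before pair $j$, pairs $1,\dots,j-1$ are collected'' is designed precisely to control this, with the fullness observation (i) serving as the technical lever that keeps collected pair-layers out of reach of later operations; once the invariant is verified, the process clearly halts after $d_1$ steps and the remaining deductions are routine.
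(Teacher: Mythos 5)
Your proof is correct and follows the route the paper intends: the upper bound comes from exhibiting the two-block circuit ($d_1$ full pair-layers concatenated with a depth-$d_r$ layering of the residual gates, legal because all MCZR gates commute), and the lower bound from repeatedly applying the gate-exchange of \textbf{Lemma}~\ref{theorem_complementary} to an optimal layering until every pair is collected. The paper itself states \textbf{Lemma}~\ref{optimal_any_depth} without a detailed proof (merely noting it is derived from \textbf{Lemma}~\ref{theorem_complementary}), so your invariant argument --- in particular the observations that a collected complementary pair covers all $n$ qubits and hence fills its layer, that each gate has a unique complementary partner so later exchanges never touch an already-collected pair-layer, and that the surviving residual layers form a valid layering of the $r$ leftover gates --- supplies exactly the bookkeeping the paper leaves implicit.
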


A special case of \textbf{Lemma}~\ref{optimal_any_depth} is \textbf{Theorem}~\ref{theorem_optimal_depth}, such that $m=2d_1$ gives us ${d_{opt}={d_1}}$. In general, we can accomplish the second step of \textbf{Lemma}~\ref{optimal_any_depth}
by comparing at most $r!$ different layer configurations and find the depth-optimal circuit over a given gate set $S$. However, for $S$ with a moderate value $r$, the number of all possible layer configurations can be quite large and thus the optimal depth is usually hard to determine. To deal with such complicated cases, in the following we further propose a flexible iterative algorithm for optimizing the depth of a circuit with no complementary gates.

\subsection{A flexible iterative depth-optimization algorithm}\label{strategy1}

In this section, we propose an iterative algorithm denoted \textbf{Algorithm} \ref{greedy_layer} for optimizing the depth of MCZR circuits with no complementary gates, and reveal its flexibility with a use case.

\SetKwIF{If}{ElseIf}{Else}{if}{then}{else if}{else}{end\ if}
\SetKwFor{For}{for}{do}{end\ for}
\SetKwFor{ForEach}{foreach}{do}{}
\SetKwFor{While}{while}{do}{end\ while}

\begin{algorithm}[!htpb]
    \caption{An iterative depth-optimization algorithm for MCZR circuits.}
    \label{greedy_layer}
    \LinesNumbered
    \SetAlgoNoLine
    \SetKwFunction{funcGLF}{Greedy\_Layer\_Formation}
    \SetKwFunction{funcGNG}{Generate\_New\_GateSeq}
    \KwIn{A depth-$d$ MCZR circuit $QC$ with its constituent gates located from left to right
    as a sequence 
$SEQ=[act_k:k=1,2,\ldots,m]$, with $act_k$ being the qubit set of the $k$th gate; an iteration number $iter\geq 1$. 
     }
    \KwOut{A circuit $QC_{opt}$ over gates in $SEQ$ with a layer configuration $R=\{L_i: i=1,2,\ldots,d_{opt}\}$ such that ${d_{opt}}\leq d$.}

    \BlankLine
    
    \textbf{main program:}
    
    Calculate the circuit depth lower bound $LB$ for $SEQ$ by Eq.~\eqref{depth_lower_bound}.

    $[R^{(1)},d^{(1)}]=\funcGLF{SEQ}$; $t\leftarrow 1$;

\If(\tcp*[h]{Perform iterative layer  formation. }){    $d^{(1)}>LB$ $\&\&$ $iter \ge 2$ }{
        
      \For{$t \leftarrow 2$ \KwTo $iter$}{ 
      
      $SEQ^{(t)}=\funcGNG{{\text{$R^{(t-1)}$}}}$;
      
      $[R^{(t)},d^{(t)}]=\funcGLF{{\text{$SEQ^{(t)}$}}}$;

        \If(){$d^{(t)}==LB$}{
            break;
        }
    }
}    
    $d_{opt}\leftarrow d^{(p)}=min\{d^{(q)}:q\in [t]\}$; $R\leftarrow R^{(p)}$;
   
    \KwRet{$[R,d_{opt}].$}

   \BlankLine
    \SetKwProg{fn}{function}{:}{}
    \fn{\funcGLF{SEQ}}{
  $i\leftarrow0$; 
    
    \While{$|SEQ| \neq 0$}{
        $i\leftarrow i+1$;    
        $c\leftarrow 0$; 
        $L_i\leftarrow \varnothing$; 
        $remove\_set\leftarrow \varnothing$; 
        
        \For(\tcp*[h]{Greedily form the layer $L_i$. }){$k\leftarrow1$ \KwTo $|SEQ|$} {
            \If{$L_i$ and $SEQ[k]$ have no  integers in common}{ 
                $c\leftarrow c+1$;  
                $L_i[c]\leftarrow SEQ[k]$;
                $remove\_set[c]\leftarrow k$;   
            }
            
        }
        \textbf{Delete} $SEQ[remove\_set]$;
    }
     $d\leftarrow i$;
     
      \KwRet{ $[R=\{L_1, L_2, \ldots,L_d\},d]$ }.
      
    \Indm  
    \textbf{end function}
}

    \BlankLine
     \fn{\funcGNG{ $R=\{{L_i}=[act^i_1,act^i_2,\ldots,act^i_{|L_i|}]:i=1,2,\ldots,d  \}$  } }
{ $SEQ=[act^1_1,act^2_1,\ldots,act^d_1,act^1_2,act^2_2,\ldots,act^d_2,\ldots,act^p_{|L_p|}]$ with the layer index $p$ such that $|L_p|=max\{|L_i|:i\in [d]\}$;

 \KwRet{ $SEQ$ }.
      
    \Indm  
    \textbf{end function}
}

\end{algorithm}

The input of \textbf{Algorithm} \ref{greedy_layer} includes: a given MCZR circuit $QC$ with its constituent gates located from left to right as a sequence $SEQ=[act_k:k=1,2,\ldots,m]$, with $act_k$ being the set of qubits
     acted upon by the $k$th gate, and an iteration number $iter\in \mathbb{N}^+$. The output is a circuit over gates in $SEQ$ that has a depth smaller than or equal to that of $QC$.
     Notice that two subroutine functions \funcGLF and \funcGNG are introduced here: the former receives a gate sequence $SEQ$ and can arrange as many disjoint gates in $SEQ$ into each layer as possible to form a circuit layer configuration $R$, while the latter can generate a new gate sequence $SEQ$ from a given circuit $R=\{L_i:i=1,2,\ldots,d\}$ by extracting and regrouping gates in original layers $L_i$. Since the application of our greedy layer formation procedure on different sequences over a given MCZR gate set may result in distinct circuits, we will iteratively use these two functions in our \textbf{main program} to seek circuits with the shortest possible depth as follows. 
     
     First, since two gates that act on the same qubit must be located in different layers of a circuit,   a depth lower bound $LB$ on all possible circuits constructed from the input gate set $SEQ$ can be derived as:
    \begin{equation}
        \label{depth_lower_bound}
        LB(SEQ)=max\{\textsc{Count}(j,SEQ):j\in[n]\},
    \end{equation}
where $\textsc{Count}(j,SEQ)$ indicates the number of integer $j$ appeared in $SEQ$. Second, we apply the function \funcGLF to the input gate sequence $SEQ$ and obtain a new depth-$d^{(1)}$ circuit with layer configuration $R^{(1)}$ such that $d^{(1)}\leq d$. Third, if $d^{(1)}>LB$ and   $iter\geq 2$, we can further iteratively generate a new gate sequence $SEQ^{(t)}$ from the previous circuit $R^{(t-1)}$ via \funcGNG, followed by  applying \funcGLF to obtain a new circuit $R^{(t)}$ of depth $d^{(t)}$ in each loop $t\geq 2$. In this process, we can terminate the loop when getting the optimal depth as $d^{(t)}=LB$. Finally, we choose the circuit with shortest depth among all constructed $\{R^{(t)}\}$ above as our output depth-optimized circuit $R=\{L_1,L_2,\ldots,L_{d_{opt}}\}$.
As a result, \textbf{Algorithm} \ref{greedy_layer} ensures that: (1) $d_{opt}\leq d^{(1)}\leq d$, and (2) ${d_{opt_2}}\leq{d_{opt_1}}$ for two iteration numbers  ${iter_2}\geq {iter_1}$. Therefore, our \textbf{Algorithm} \ref{greedy_layer} controlled by an iteration number $iter$ is a flexible depth-optimization algorithm
by considering the relation between the reduced depth and optimization time cost.

\begin{figure}[htp]
    \centering
\includegraphics[width=0.47\textwidth]{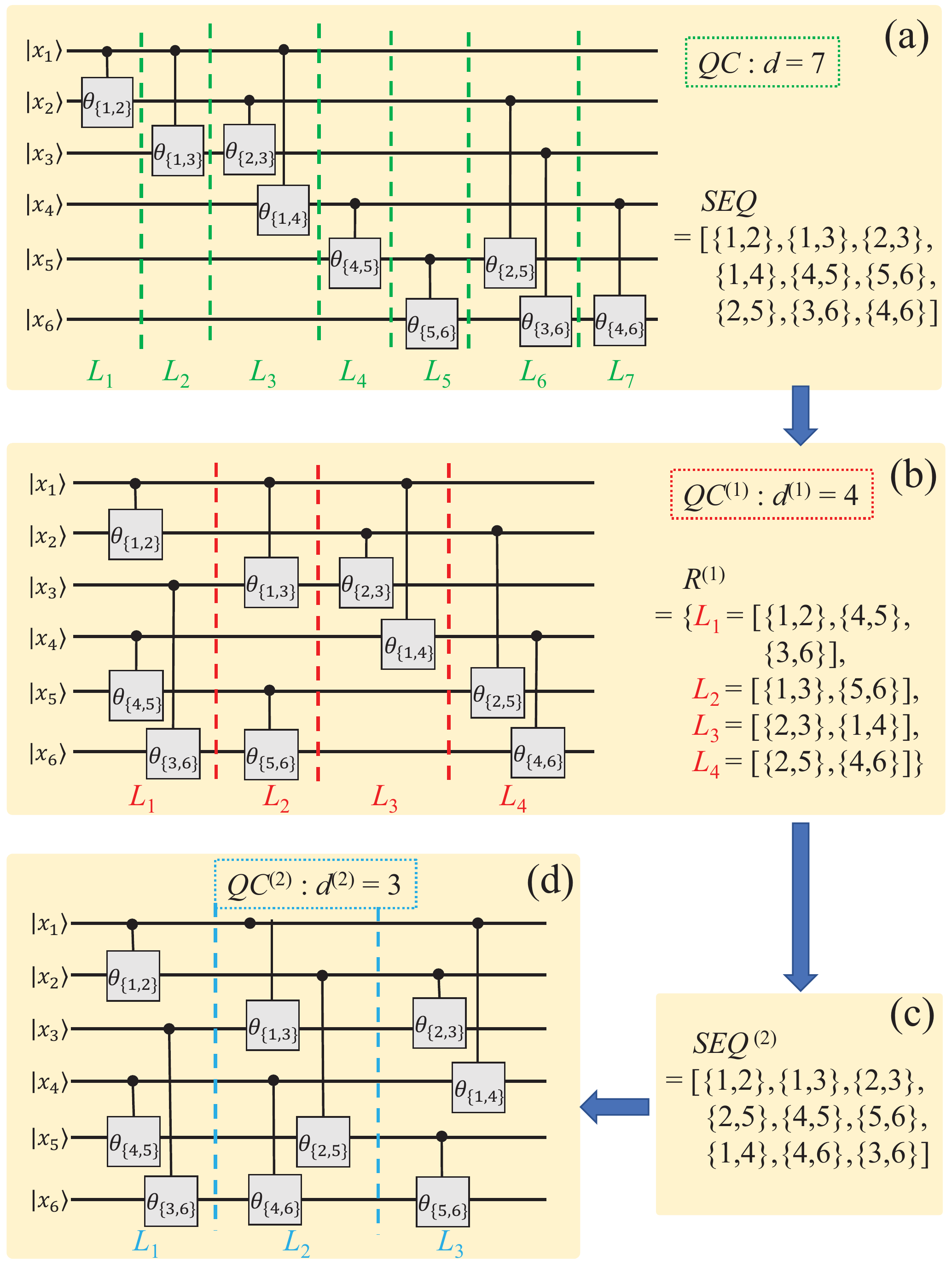}
\caption{An example to demonstrate \textbf{Algorithm}~\ref{greedy_layer} with $iter=2$. (a) A given 6-qubit MCZR circuit $QC$ of depth $d=7$, with its 9 two-qubit gates $CZ(\theta_{\{i,j\}})$  being separated by green dashed lines as $\{L_1,L_2,\ldots, L_7\}$ and in a sequence $SEQ=[\{1,2\},\{1,3\},\ldots,\{3,6\},\{4,6\}]$. The circuit depth lower bound for $SEQ$ is $LB=3$ by Eq.~\eqref{depth_lower_bound}. Then, we apply the function {\protect \funcGLF} to (a) and obtain a circuit $QC^{(1)}$ of depth $d^{(1)}=4$ as shown in (b), where its four gate layers are separated by red dashed lines as $R^{(1)}=\{L_1,L_2,L_3,L_4\}$ and Eq.~\eqref{four_layers}. Due to $d^{(1)}>LB$ and $iter=2$, next we apply the function {\protect\funcGNG} to $R^{(1)}$ and generate a new gate sequence $SEQ^{(2)}$ in (c). Once again, we apply {\protect\funcGLF} to (c) and  obtain a new circuit $QC^{(2)}$ of depth $d^{(2)}=3$ in (d), achieving the optimal circuit depth $LB$. }
    \label{figALG1}
\end{figure}

A demonstrative example of \textbf{Algorithm} \ref{greedy_layer} is shown in Fig.~\ref{figALG1}. The gate sequence for the 6-qubit and depth-7 circuit $QC$ consisting of 9 two-qubit $CZ(\theta)$ gates as shown in Fig.~\ref{figALG1}(a) is 
\begin{align}
\label{six_sequence}
    SEQ=[&\{1,2\},\{1,3\},\{2,3\},\{1,4\},\{4,5\},
    \nonumber
    \\ &\{5,6\},\{2,5\},\{3,6\},\{4,6\}],
\end{align}
and we apply \textbf{Algorithm} \ref{greedy_layer} with $iter=2$ to achieve a depth-optimized circuit as follows:  
\begin{enumerate}
\item[(1)] First, we calculate the depth lower bound on circuits for $SEQ$ by Eq.~\eqref{depth_lower_bound} as $LB=3$.
\item[(2)] Second, we apply  \funcGLF to $SEQ$ in Eq.~\eqref{six_sequence} and obtain a new circuit $QC^{(1)}$ of depth ${d^{(1)}}=4$ as shown in Fig.~\ref{figALG1}(b), which has a layer configuration $R^{(1)}=\{L_1,L_2,L_3,L_4\}$ with 
\begin{align}
\label{four_layers}
~~~~ &L_1=[act^1_1=\{1,2\},act^1_2=\{4,5\}, act^1_3=\{3,6\}],\nonumber \\ 
&L_2=[act^2_1=\{1,3\},act^2_2=\{5,6\}],\nonumber \\   
& L_3=[act^3_1=\{2,3\},act^3_2=\{1,4\}],\nonumber \\  
&L_4=[act^4_1=\{2,5\},act^4_2=\{4,6\}].
\end{align}
Intuitively, the comparison between the circuit $QC$ in Fig.~\ref{figALG1}(a) and $QC^{(1)}$ in Fig.~\ref{figALG1}(b) reveals that the working principle of our function \funcGLF is to move the gates in the right column of original circuit to fill the vacancies in the left column as much as possible, thus causing a circuit depth reduction.  
\item[(3)] Third, we apply \funcGNG to $R^{(1)}$ in Eq.~\eqref{four_layers} due to the condition $d^{(1)}>LB$ and $iter>1$, and generate a new gate sequence $SEQ^{(2)}$  
shown in Fig.~\ref{figALG1}(c) as 
\begin{align}
\label{new_sequence}
    SEQ^{(2)}=[&\{1,2\},\{1,3\},\{2,3\},\{2,5\},\{4,5\},\nonumber
    \\ &\{5,6\},\{1,4\},\{4,6\},\{3,6\}].
\end{align}
\item[(4)] Finally, we apply \funcGLF again to Eq.~\eqref{new_sequence} and obtain a new layer configuration $\{L_1,L_2,L_3\}$, that is,  the circuit $QC^{(2)}$ of depth ${d^{(2)}}=3$  as shown in Fig.~\ref{figALG1}(d). 
\end{enumerate}

Note that if we apply \textbf{Algorithm}~\ref{greedy_layer} with only $iter=1$ to $SEQ$ in Fig.~\ref{figALG1}(a), the resultant depth-optimized circuit would be just $QC^{(1)}$ in Fig.~\ref{figALG1}(b). This simple example implies that, if we apply \funcGLF to more distinct gate sequences generated from \funcGNG, the more significant depth reduction over the original circuit is likely to occur at the expense of more optimization time. More practical cases of \textbf{Algorithm}~\ref{greedy_layer} will be demonstrated 
in Section~\ref{applications}.

\section{Experimental evaluation}\label{applications}

To further evaluate the performances of the proposed synthesis and optimization methods, here we refine them into two explicit workflows and consider their applications to two typical use cases in quantum computing. All experiments are performed with MATLAB 2022a on an  Intel Core i5-12500 CPU operating at 3.00 GHz frequency and 16GB of RAM.

\subsection{Workflow of our synthesis and optimization methods}
\label{workflow_methods}

For convenience, here we summarize the main results in Secs.~\ref{synthesis algorithm} and \ref{depth optimization} into the workflow to fulfill two types of tasks as follows:

\textbf{Task 1}: How to construct a gate-count optimal MCZR circuit followed by further depth-optimization for implementing a given diagonal unitary matrix in Eq.~\eqref{Given Diagonal}? 

\textbf{Workflow 1}: First, we synthesize a gate-count optimal MCZR circuit according to \textbf{Theorem}~\ref{pair-wise_synthesis} with $m$ gates, which includes two parts: (i) $d_1$ layers of complementary gates denoted $QC_1$, and (ii) the other $(m-2{d_1})$ gates. Second, we apply \textbf{Algorithm}~\ref{greedy_layer} with a specified parameter $iter$ to optimize the part (ii) into a depth-$d_2$ circuit $QC_2$. Finally, the overall output circuit is $QC={QC_1}\circ {QC_2}$ of depth ${d_1}+{d_2}$.

\textbf{Task 2}: How to optimize the circuit depth of a given MCZR circuit $QC$ over the gate set  $S=\{G(v,\theta_v)\}$ with $|S|=m$? 

\textbf{Workflow 2}:  First, we perform the gate-exchange operation to $QC$ according to \textbf{Lemma}~\ref{theorem_complementary}, which 
arranges all $d_1$  pairs of complementary gates in $S$ into a depth-$d_1$ circuit  denoted $QC_1$. Second, we apply \textbf{Algorithm}~\ref{greedy_layer} to the other $(m-{2d_1})$ gates and obtain a circuit $QC_2$ of depth $d_2$. Finally, putting these results together gives a depth-optimized circuit $QC_{opt}={QC_1}\circ {QC_2}$ of depth ${d_1}+{d_2}$.

In the following, we demonstrate the utility of above workflows for two  practical quantum computing tasks: (1) constructing diagonal Hermitian quantum operators, and (2) optimizing the depth of QAOA circuits.

\subsection{Diagonal Hermitian quantum operators}
\label{diagonal Hermitian}
 We use $D_H^{(n)}$ to denote an 
 $n$-qubit diagonal Hermitian quantum operator with its diagonal elements being $\pm 1$, and there are totally $2^{2^n-1}$ different such operators since  $D_H^{(n)}$ and $- D_H^{(n)}$ are essentially equivalent. Note that operators of this type act as the oracle operator or fixed operator in the well-known Deutsch-Jozsa algorithm \cite{Deutsch1992,Collins1998}, Grover's algorithm \cite{grover1996fast} and some recent  algorithms showing quantum advantage for string learning and identification \cite{XU2023quantum,Li2022,Huang2022}. Therefore, an efficient construction of $D_H^{(n)}$ over MCZR gates would facilitate the  implementation of relevant quantum algorithms on specific devices~\cite{song2017continuous,hill2021realization}.   
 
Prior work \cite{houshmand2014decomposition} has  revealed that $D_H^{(n)}$ can be synthesized by at most $2^n-1$ multiple-controlled Pauli $Z$ gates,  that is, MCZR gates with a fixed angle parameter $\pi$,  based on a binary representation and solving linear equations over the binary field ${\mathbb{F}}_2$. As comparison, here we apply our synthesis and optimization methods to construct circuits for realizing such operators, and to be more specific, our strategies include: pair-wise synthesis method  in \textbf{Theorem}~\ref{pair-wise_synthesis}  ($app01$), our \textbf{Workflow 1} in Sec.~\ref{workflow_methods} with $iter=1$ ($app02$), $iter=5$ ($app03$), and $iter=20$ ($app04$), respectively. We perform experiments on all 8, 128, 32768 diagonal Hermitian operators $D_H^{(n)}$ for $n=2,3,4$, respectively, as well as 100 randomly selected ones for each $ 5 \le n \le 12 $, and compare our results with the previous work.  
Due to the uniqueness property , our constructed circuits have the same MCZR gate set as that from Ref.~\cite{houshmand2014decomposition}, and therefore we mainly illustrate our circuit depth reduction. The detailed experimental results are presented in Fig.~\ref{fig_Hermitian}.

\begin{figure*}
    \centering
    \includegraphics[width=0.92\textwidth]{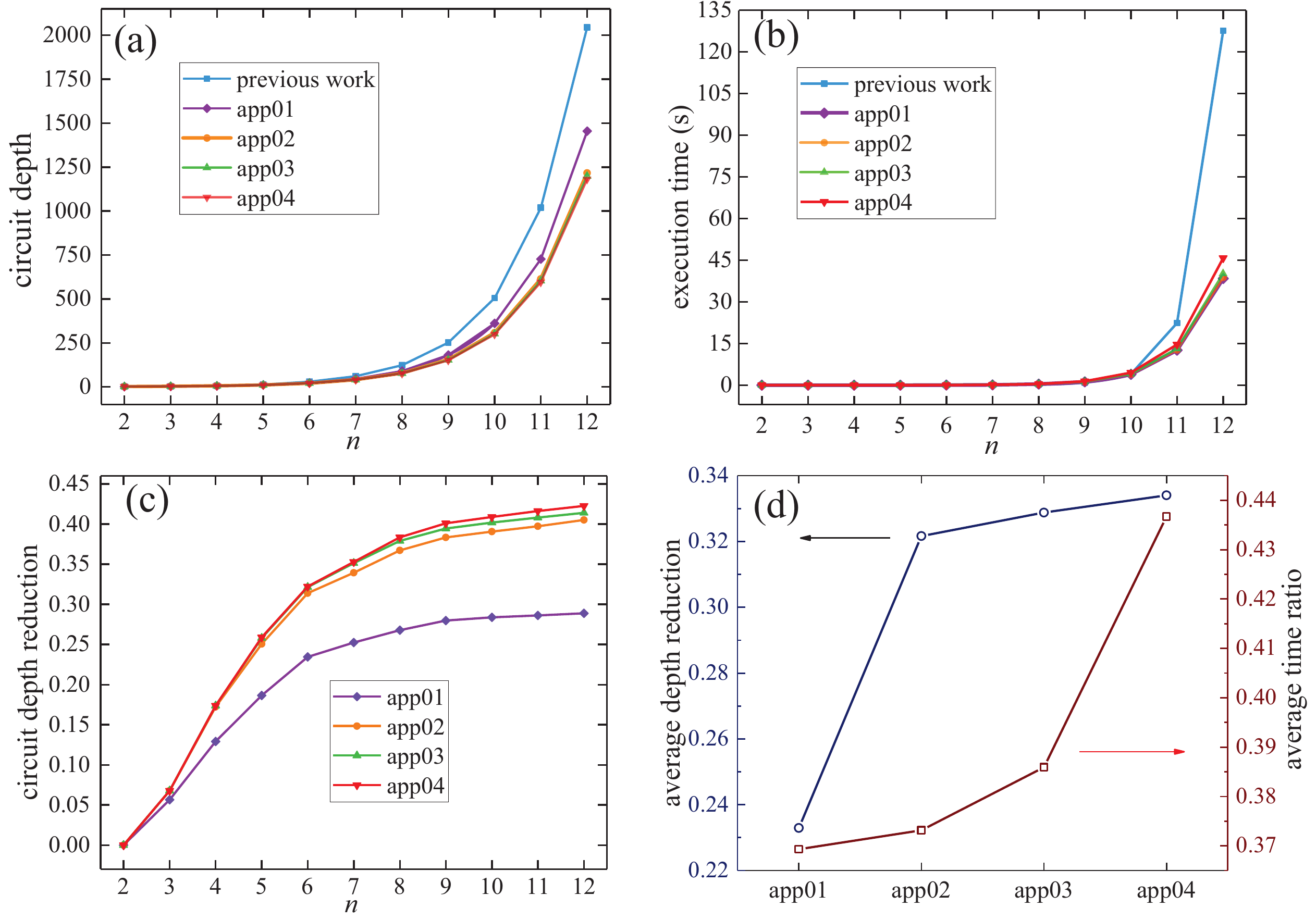}
    \caption{
    Experimental results of constructing diagonal Hermitian operators 
     with size $n\in [2,12]$ by applying a previous method \cite{houshmand2014decomposition}, our circuit synthesis method in \textbf{Theorem}~\ref{pair-wise_synthesis}  ($app01$), and our \textbf{Workflow 1} with $iter=1 (app02), 5(app03)$, and 20($app04$)
    , respectively. (a) The blue, purple, orange, green, and red curves indicate the average depth of circuits obtained from previous work and $app01$ to $app04$ for each $n$. Accordingly, the execution time and circuit depth reduction over the previous work as a function of $n$ on average are respectively recorded in
    (b) and (c), indicating that our four strategies can achieve both a reduced circuit depth and less execution time compared to previous work. Notably, all our strategies can have a more significant depth reduction for large-size $n$, and the effectiveness of our depth-optimization \textbf{Algorithm} \ref{greedy_layer} can be reflected by comparing $app02$-$app04$ with $app01$. 
    (d) As an overall performance evaluation, the average depth reduction and time ratio of our four strategies over the previous work for the entire set of instances are displayed in dark blue and dark red lines, respectively, such that on average we can achieve a 33.40\% depth reduction with only 43.67\% time by $app04$.}
    
    \label{fig_Hermitian}
\end{figure*}

In Fig.~\ref{fig_Hermitian} (a), we present 
the average circuit depth of $n$-qubit MCZR circuits ($n\in [2,12]$) constructed from the previous work \cite{houshmand2014decomposition}, our four strategies $app01$, $app02$, $app03$, and $app04$ by the blue, purple, orange, green, and red curve, respectively. Accordingly, the average execution time of constructing a circuit of size $n$ by these strategies are recorded in Fig.~\ref{fig_Hermitian} (b). Typically, the time growth of our sole circuit synthesis algorithm $app01$ as a function of $n$ agrees well with the total time complexity of calculating Eq.~\eqref{theorem_angle}, that is, $\propto n3^n$. As comparison, the time of previous work \cite{houshmand2014decomposition} increases more drastically wih $n$, since its  most time-consuming procedure for solving linear equations over ${\mathbb{F}}_2$ to determine whether each MCZR gate exists or not would require time scaling roughly as $O(N^3)=O(8^n)$. 
It is worth noting that all our four strategies have both a reduced circuit depth and less execution time over the previous work. 
In Fig.~\ref{fig_Hermitian} (c), the  circuit depth reduction curve for each of our strategies shows an explicit upward trend as the circuit size $n$ increases, which can achieve as high as 28.88\%, 40.51\%, 41.40\%, 42.27\% for constructing a circuit of $n=12$ on average in time 38.40s, 38.79s, 40.16s, and 45.78s, respectively. Also, the usefulness of \textbf{Algorithm} \ref{greedy_layer} is reflected by observing that $app02$  can achieve a 11.57\% smaller depth over the sole synthesis algorithm $app01$ at the expense of only 
 1.03\% more time for circuits of $n=12$, while $app03$ and $app04$ give us shorter and shorter depths as $iter$ increases.
Finally, in Fig.~\ref{fig_Hermitian} (d) we evaluate the overall average performances of our strategies $app01$, $app02$, $app03$, and $app04$ for all involved circuit instances with $n\in[2,12]$, including the average depth reduction of 23.29\%, 32.16\%, 32.88\%, and 33.40\%, 
and the average time ratio of 36.93\%, 37.31\%, 38.59\%, and 43.67\% with respect to the previous work, respectively.
It seems that for such circuit instances, the average depth-optimization trend would rise slowly as the iteration number $iter$ in
\textbf{Workflow 1} increases.

\begin{figure*}
    \centering
  \includegraphics[width=0.92\textwidth]{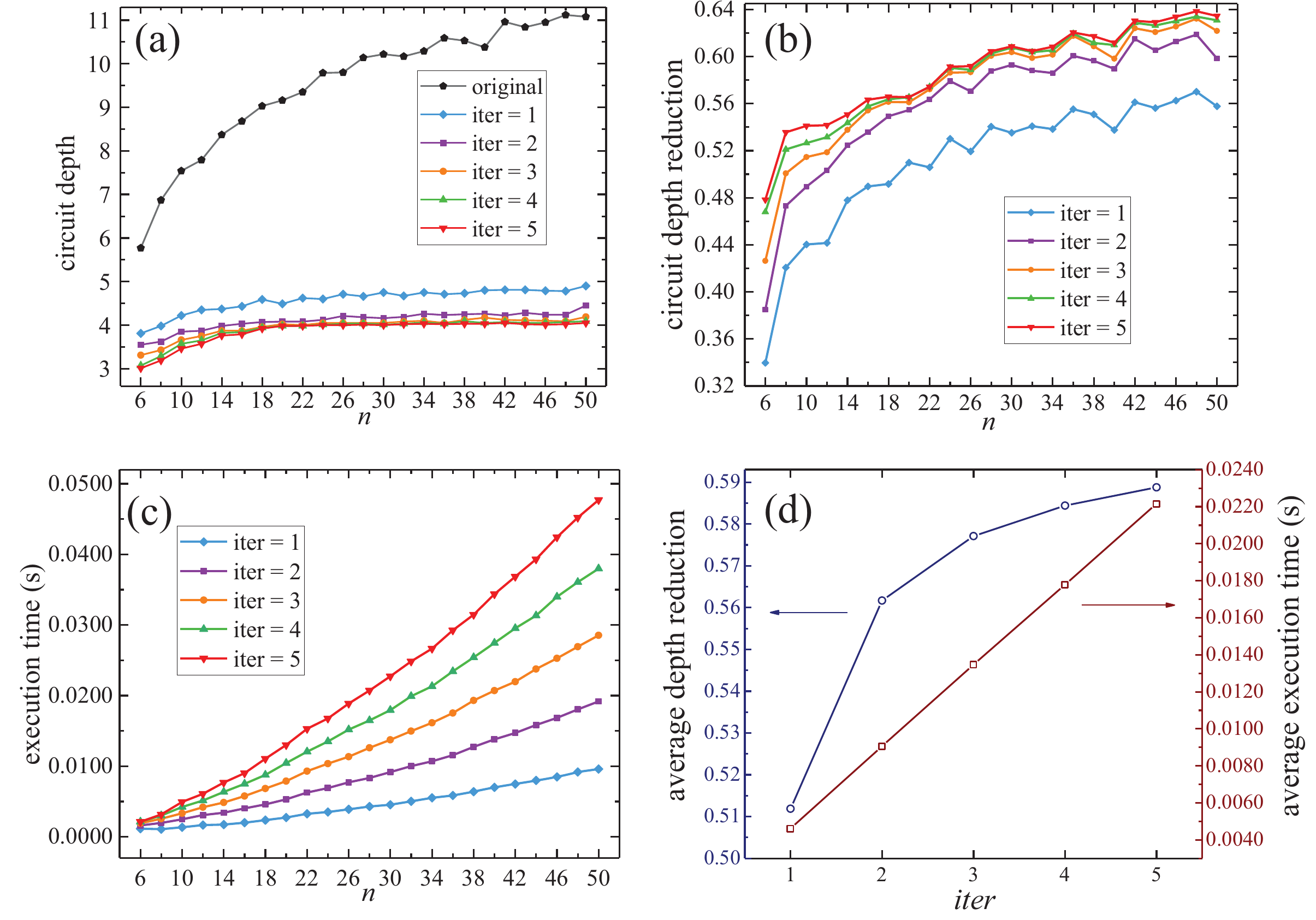}
    \caption{
    Experimental results of optimizing the depth of phase-separation parts in 100 randomly selected $n$-qubit  QAOA circuits with even $n\in [6,50]$ by applying \textbf{Algorithm}~\ref{greedy_layer} with $iter=1,2,3,4,5$, respectively. (a) The black, blue, purple, orange, green, and red curves indicate the average circuit depth of original 100 random $n$-qubit instances as well as optimized ones with $iter=1$ to 5, respectively. Accordingly, the circuit depth reduction and execution time as a function of $n$ on average are respectively recorded in
    (b) and (c), both of which show an upward trend on the whole. Note that the results for $iter=1$ are equivalent to the previous min-layer formation method aimed at optimizing QAOA circuits \cite{alam2020efficient}, while as comparison our \textbf{Algorithm}~\ref{greedy_layer} is more flexible and useful since it can achieve a more significant circuit depth reduction 
    by adjusting the parameter $iter$ at the cost of more execution time. 
    (d) As an overall performance evaluation, the average depth reduction and execution time for all 2300 circuit instances with different $iter$ are displayed in dark blue and dark red, respectively, where the time cost shows a nearly-linear growth when increasing $iter$.}
    \label{figQAOA}
\end{figure*}

In summary, here we demonstrate our \textbf{Workflow 1} for  synthesizing and optimizing MCZR circuits by taking diagonal Hermitian operators as an  example, which can show substantial improvement over the previous work in terms of both circuit depth and execution time.
In addition, our results empirically validate that a shorter circuit depth is likely to be achieved by increasing the iteration number $iter$ in \textbf{Algorithm} \ref{greedy_layer} with more time (see Fig.~\ref{fig_Hermitian}.(d)). In the following, we focus on another example to highlight the flexibility of  \textbf{Algorithm}~\ref{greedy_layer} for realizing controllable depth optimization.

\subsection{Phase-separation part in QAOA circuit}

 Quantum Approximate Optimization Algorithm (QAOA) is a well-known hybrid quantum-classical  algorithm designed to solve combinatorial
optimization problems. A typical stage of the QAOA circuit for the MaxCut problem consists of three parts: a layer of Hadamard gates, a phase-separation part consisting of $CZ(\theta)$ gates, and a layer of $R_x$ rotation gates. Here we focus on reducing the depth of the middle part in  $n$-qubit MaxCut-QAOA circuits of 3-regular graphs
\cite{alam2020efficient} by using our \textbf{Workflow 2} in Sec.~\ref{workflow_methods}, which is thus   \textbf{Algorithm}~\ref{greedy_layer} for  $n\ge 6$. 

   To our knowledge, 
prior work ~\cite{alam2020efficient} has used a so-called min-layer formation (MLF) procedure  for reducing the 
number of  $CZ(\theta)$ gate layers in QAOA circuits, which is exactly 
a particular case of our \textbf{Algorithm}~\ref{greedy_layer} with  the iteration number taken as $iter=1$. 
For comparison, here we  apply   \textbf{Algorithm}~\ref{greedy_layer} with $iter=1,2,3,4,5$ to optimize such phase-separation part consisting of two-qubit $CZ(\theta)$ gates in 
OAOA circuits, respectively. 
According to the definition of 3-regular graphs such that every vertex is connected to  three other vertices, the circuit depth lower bound in Eq.~\eqref{depth_lower_bound} is determined to be 3 for any circuit instance input to  \textbf{Algorithm}~\ref{greedy_layer}. 
As an example, the depth optimization of a 6-qubit phase-separation circuit $QC$ of depth 7 by taking $iter=2$ has been presented in Fig.~\ref{figALG1}.
More broadly, here we pick the $n$-qubit circuit instances corresponding to $n$-node 3-regular graphs with $n$  being an even number in the range of 6 to 50, and for each size $n$ we  randomly
pick 100 graphs. Thus, a total of $23 \times 100=2300$  
MaxCut-QAOA circuit instances have been used for the evaluation. 
The experimental results are  presented in Fig.~\ref{figQAOA}.

The average circuit depth of 100 original randomly selected $n$-qubit QAOA circuits for $n\in [6,50]$ is shown as the black curve in Fig.~\ref{figQAOA} (a), where the blue, purple, orange, green, and red curves indicate the optimized circuit depth obtained from performing \textbf{Algorithm}~\ref{greedy_layer} with
$iter=1$ (that is, MLF procedure in Ref.~\cite{alam2020efficient}) as well as $iter=2,3,4,5$, respectively. Specifically, 
  the optimized circuit depths as indicated by the red line in Fig.~\ref{figQAOA}(a) with $iter =5$ grows quite slowly and ranges from 3.00 to 4.05 for $n\in [6,50]$. Accordingly, Figs.~\ref{figQAOA} (b) and ~\ref{figQAOA}(c) show the circuit depth reduction and execution time for each instance with size $n$ on average, respectively. In particular,  the depth-reduction curve for each setting $iter$ is growing overall as the circuit size $n$ increases,  and can achieve as high as 63.45\% for $n=50$ in time less than 0.05s when adopting $iter=5$. Furthermore, Fig.~\ref{figQAOA} (d) shows the overall  performance of \textbf{Algorithm}~\ref{greedy_layer} with $iter = 1,2,3,4,5$ on all 2300 circuit instances, where on average we can 
achieve a depth reduction of 51.19\%, 56.17\%, 57.71 \%, 58.44\%, and 58.88\% over one original randomly selected QAOA circuit instance by using time of 0.0046s, 0.0090s, 0.0135s, 0.0178s and 0.0222s for  each $iter\in [1,5]$, respectively. 
Notably, the average execution time scales nearly linearly as $iter$ increases from 1 to 5, and the average depth obtained from $iter=5$ is 15.55\% smaller than that from $iter=1$ at the expense of 4.81X increase in time.
Once again, these results reflect the flexibility of  \textbf{Algorithm}~\ref{greedy_layer} as it can achieve a shorter circuit depth at the expense of more execution time. 
Therefore, for dealing with such QAOA-circuit case one can take \textbf{Algorithm} \ref{greedy_layer}  
with gradually increasing the iteration number $iter$ to seek the best possible results.
 
Finally, we point out the expense of depth-optimization time overhead is especially  worthwhile in the use-case of QAOA since the obtained circuit needs to be executed on the quantum hardware many times for solving the MaxCut problem, and thus a shorter circuit depth obtained from the precedent optimization procedure could save a large amount of time in the subsequent process of running the QAOA circuit. 
As a result, our depth-optimized circuits might be executed on the scalable quantum processor with non-local connectivity~\cite{Bluvstein2022}, or can act as a better starting point for possible further circuit compilation if needed \cite{alam2020efficient}. 

\section{Discussion and Conclusion}
\label{Conclusion}
In this study, we present a systematic study of quantum circuits over multiple-control $Z$-rotation gates with continuous parameters. Based on an established polynomial representation, we derive a gate-count optimal synthesis of such circuits for implementing any diagonal unitary matrix, which also enables the circuit depth optimal for specific MCZR circuits. Furthermore, we propose practical optimization strategies for reducing the circuit depth of any given MCZR circuit, which can show substantial performance improvement over prior works for typical examples in quantum computing. Compared to the conventional study of implementing diagonal unitary operators over the single- and two-qubit gate set \cite{markov2004,2014welch,zhang2022automatic}, here we provide an alternative scheme by utilizing a multiqubit gate set as the computational primitives, which would match the quantum experimental progress in certain directions, such as neutral atoms \cite{levine2019parallel} and superconducting systems \cite{song2017continuous,2020PhysRevApplied.14.014072}. 
In addition, note that above techniques are raised for dealing with general cases, we point out there may also exist other useful ideas aimed at special-case circuits. For example, particular quantum graph states \cite{bravyi2018quantum} or hypergraph states \cite{PRXQuantum020333} can be prepared with linearly many MCZR gates and constant depth by observing their underlying lattice graphs. Readers of interest could explore more about such specific cases.

Although this paper mainly focuses on quantum circuits over MCZR gates, it may enlighten the research on  other types of circuits as well.  First, the circuit-polynomial correspondence put forward to characterize MCZR circuits 
extends the concept of \textit{phase polynomial} representation \cite{2018nam}, again implying that an appropriate representation could facilitate circuit synthesis and/or optimization. Second, the depth-optimization strategies introduced in Section~\ref{depth optimization} are actually suitable for any quantum circuit over commuting gates, such as IQP (instantaneous quantum
polynomial-time) circuits used to demonstrate quantum advantage \cite{Bremner2017}. 
Finally, this study sheds light on implementing diagonal unitary operators over other available gate sets, such as  the multiply-controlled
Toffoli gates acting on fewer qubits by considering gate simulation \cite{barenco1995elementary}. Therefore, we would like to investigate these interesting topics in the future work.

\begin{acknowledgments}
This work was supported by the National Natural Science Foundation of China (Grant Nos. 62102464, 62272492, 61772565),
 the Guangdong Basic and Applied Basic Research Foundation
(Grant No. 2020B1515020050), and Project funded by
China Postdoctoral Science Foundation (Grant Nos. 2020M683049,
2021T140761).  We appreciate Dr. Li Zhang from South China 
Normal University for useful discussions on the data analysis.
\end{acknowledgments}


\bibliography{apssamp}

\end{document}